\newcommand{\citet}[1]{\cite{#1}}
\newcommandx{\unsure}[2][1=]{\todo[linecolor=green,backgroundcolor=green!25,bordercolor=green,#1]{\normalsize #2}}
\newcommandx{\improvement}[2][1=]{\todo[inline,linecolor=blue,backgroundcolor=blue!05,bordercolor=blue,#1]{\normalsize #2}}
\newcommandx{\info}[2][1=]{\todo[linecolor=yellow,backgroundcolor=yellow!25,bordercolor=yellow,#1]{#2}}
\newcommandx{\floatmodel}[2][1=]{\todo[inline,linecolor=red,backgroundcolor=yellow!25,bordercolor=yellow,#1]{#2}}
\newcommandx{\thiswillnotshow}[2][1=]{\todo[disable,#1]{#2}}
\newcommandx{\karol}[2][1=]{\todo[inline,linecolor=blue,backgroundcolor=blue!25,bordercolor=blue,caption={\normalsize \textbf{Karol}},#1]{\normalsize #2}}
\newcommandx{\jana}[2][1=]{\todo[inline,linecolor=red,backgroundcolor=red!25,bordercolor=red,caption={\normalsize
\textbf{jana}},#1]{\normalsize #2}}
\newcommandx{\michal}[2][1=]{\todo[inline,linecolor=gray,backgroundcolor=red!25,bordercolor=red,caption={\normalsize \textbf{Micha\l{}}},#1]{\normalsize #2}}
\newtheorem{theorem}{Theorem}
\newtheorem{definition}[theorem]{Definition}
\newtheorem{lemma}[theorem]{Lemma}
\newtheorem{claim}[theorem]{Claim}
\newtheorem{problem}{Open Question}
\newtheorem*{maingoal*}{Main Question}
\numberwithin{theorem}{section}
\newcommand{\eps}{\varepsilon}
\newcommand{\Oh}{\mathcal{O}}
\newcommand{\nat}{\mathbb{N}}
\newcommand{\Rr}{\mathcal{R}}
\newcommand{\Dd}{\mathcal{D}}
\newcommand{\Ss}{\mathcal{S}}
\newcommand{\Ii}{\mathcal{I}}
\newcommand{\Tt}{\mathcal{T}}
\newcommand{\Mm}{\mathcal{M}}
\newcommand{\Nn}{\mathcal{N}}
\newcommand{\Gg}{\mathcal{G}}
\newcommand{\Hh}{\mathcal{H}}
\newcommand{\real}{\mathbb{R}}
\newcommand{\opt}{\mathrm{opt}}
\renewcommand{\leq}{\leqslant}
\renewcommand{\geq}{\geqslant}
\renewcommand{\le}{\leqslant}
\renewcommand{\ge}{\geqslant}
\renewcommand{\setminus}{-}
\title{Parameterized Approximation for Maximum Weight Independent~Set of Rectangles and Segments}
\date{}
\author{
    Jana Cslovjecsek\footnote{EPFL, Switzerland, \textsf{jana.cslovjecsek@epfl.ch}.}
    \and
    Micha\l{} Pilipczuk\footnote{Institute of Informatics, University of
    Warsaw, Poland, \textsf{michal.pilipczuk@mimuw.edu.pl}. This work is a part of
    the project BOBR that has received funding from the European
    Research Council (ERC) under the European Union's Horizon 2020 research and
    innovation programme (grant agreement No 948057).}
    \and
    Karol W\k{e}grzycki\footnote{Saarland University and Max Planck Institute for Informatics,
        Saarbr\"ucken, Germany, \textsf{wegrzycki@cs.uni-saarland.de}. 
    This work is part of the project TIPEA that has
    received funding from the European Research Council (ERC) under the European Union's Horizon
    2020 research and innovation programme (grant agreement No 850979).}
}
\begin{document}

\maketitle

\thispagestyle{empty}
\begin{abstract}

In the \textsc{Maximum Weight Independent Set of Rectangles} problem
(\textsc{MWISR}) we are given a weighted set of $n$ axis-parallel rectangles in
the plane. The task is to find a subset of pairwise non-overlapping rectangles
with the maximum possible total weight.  This problem is NP-hard and the
best-known polynomial-time approximation algorithm, due to by Chalermsook and
Walczak (SODA~2021), achieves approximation factor $\Oh(\log\log n )$. While in
the unweighted setting, constant factor approximation algorithms are known, due
to Mitchell (FOCS 2021) and to Gálvez et al. (SODA 2022), it remains open to
extend these techniques to the weighted setting.

In this paper, we consider \textsc{MWISR} through the lens of parameterized
approximation. Grandoni et~al. (ESA 2019) gave a $(1-\eps)$-approximation
algorithm with running time $k^{\Oh(k/\eps^8)} n^{\Oh(1/\eps^8)}$ time, where
$k$ is the number of rectangles in an optimum solution. Unfortunately, their
algorithm works only in the unweighted setting and they left it as an open
problem to give a parameterized approximation scheme in the weighted setting.

Our contribution is a partial answer to the open question of Grandoni et al.
(ESA 2019). We give a parameterized approximation algorithm for \textsc{MWISR}
that given a parameter $k \in \nat$, finds a set of non-overlapping rectangles
of weight at least $(1-\eps) \opt_k$ in $2^{\Oh(k \log(k/\eps))}
n^{\Oh(1/\eps)}$ time, where $\opt_k$ is the maximum weight of a solution of
cardinality at most $k$.  Note that thus, our algorithm may return a solution
consisting of more than $k$ rectangles. To complement this apparent weakness, we
also propose a parameterized approximation scheme with running time $2^{\Oh(k^2
\log(k/\eps))} n^{\Oh(1)}$  that finds a solution with cardinality at most $k$
and total weight at least $(1-\eps)\opt_k$ for the special case of axis-parallel
segments.

\end{abstract}

\begin{picture}(0,0)
\put(462,-180)
{\hbox{\includegraphics[width=40px]{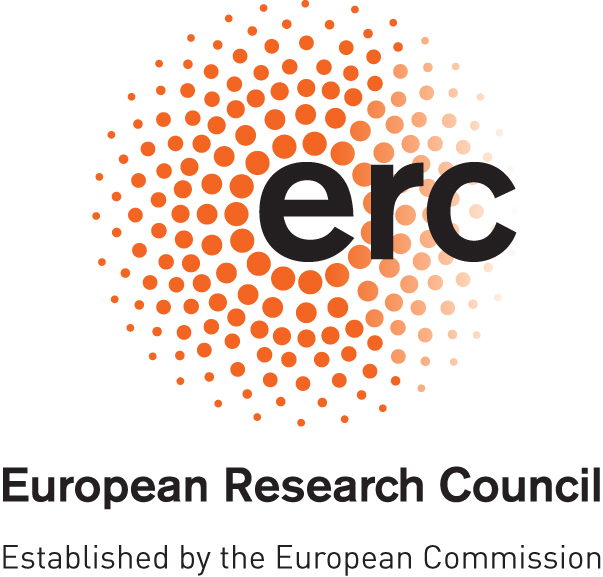}}}
\put(452,-240)
{\hbox{\includegraphics[width=60px]{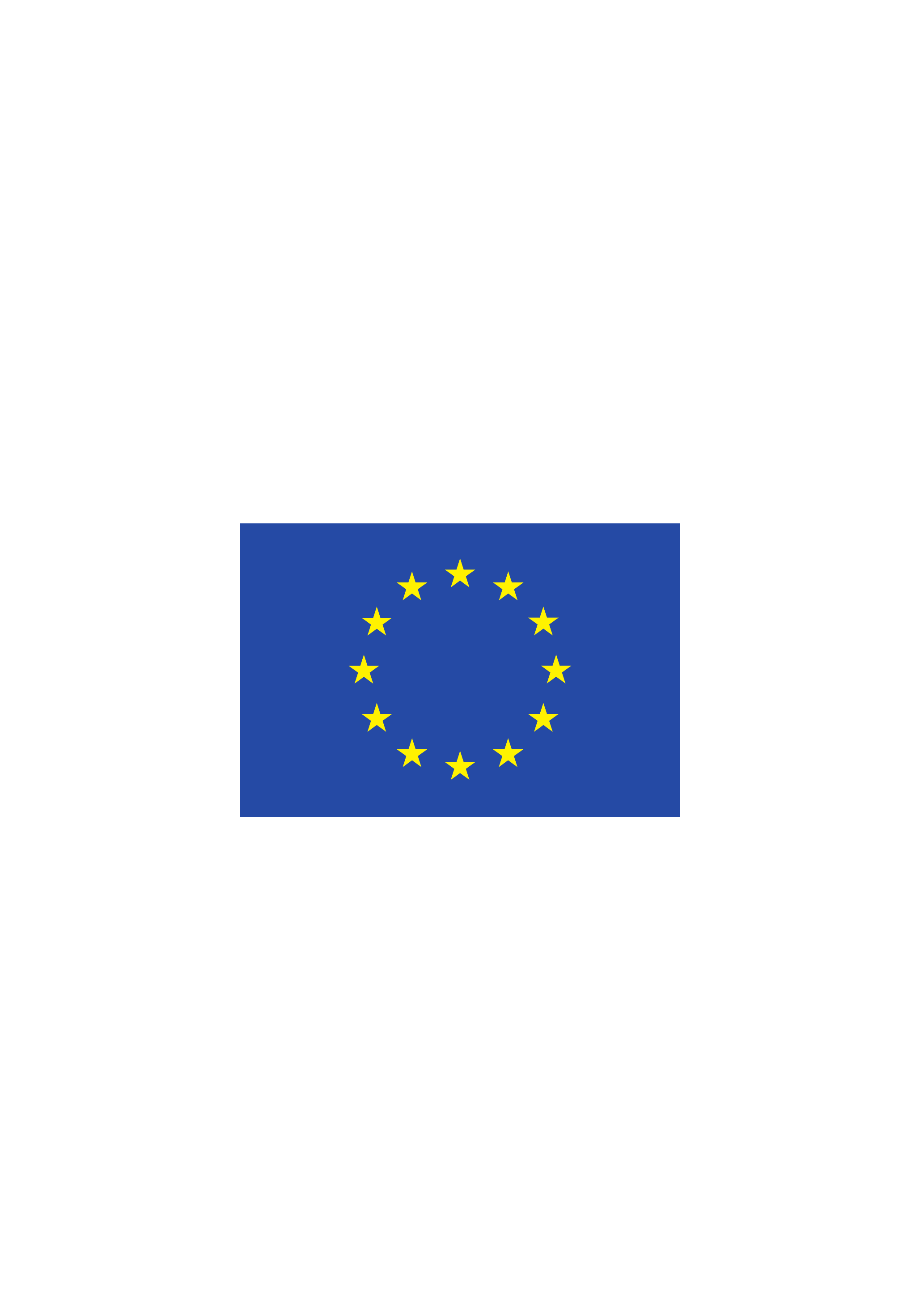}}}
\end{picture}

\clearpage
\setcounter{page}{1}

\section{Introduction}\label{sec:intro}

In the field of parameterized complexity the goal is to design an algorithm that
is efficient not only in terms of the input size, but also in terms of auxiliary
\emph{parameters}. On the other end of the spectrum, in the field of
approximation algorithms the goal is to design an algorithm that returns a
solution that is only slightly worse than the optimum one. These two notions are
traditional frameworks to deal with NP-hard problems. Recently, researchers
started to combine the two concepts and try to design approximation algorithms
that run in parameterized time. Ideally, given $\eps > 0$ and a parameter $k \in
\nat$, for example the size of the desired solution, one seeks an algorithm with
running time of the form $f(k,\eps) n^{g(\eps)}$ for some functions $f(k,\eps)$
and $g(\eps)$, which returns a $(1+\eps)$-approximate solution.  Such an
algorithm is called \emph{parameterized approximation scheme} (PAS).

In this paper we continue this line of work and apply it to a fundamental
geometric packing problem. In the \textsc{maximum weight independent set of
rectangles} (\textsc{MWISR}) problem we are given a set $\Dd$ consisting of $n$
axis-parallel rectangles in the plane alongside with a weight function $\omega
\colon \Dd \rightarrow \real$. Each rectangle $R \in \Dd$ is a closed set of
points $[x_1,x_2] \times [y_1,y_2]$ fully characterized by the positions of its four
corners. A feasible solution $\Ss \subseteq \Dd$ to the \textsc{MWISR} problem
consists of rectangles that are pairwise disjoint, i.e., for any two different
$R, R' \in \Ss$ we have $R \cap R' = \emptyset$; we also call such a solution an
{\em{independent set}}. The objective is to find a feasible solution of maximum
total weight.  In this paper, we consider a parameterized setting of the
problem. We use parameter $k \in \nat$ to denote the \emph{cardinality} of the
solution. Then $\opt_k(\Dd)$ denotes the maximum possible weight of an
independent set in $\Dd$ whose cardinality is at most $k$.

\textsc{MWISR} is a fundamental problem in geometric optimization. It naturally
arises in various applications, such as map
labeling~\cite{applications1,applications2}, data mining~\cite{data-mining},
routing~\cite{routing}, or unsplittable flow routing~\cite{bonsma}.
\textsc{MWISR} is well-known to be NP-hard~\cite{FOWLER1981133}, and it admits a
QPTAS~\cite{adamaszek13}. The currently best approximation factor achievable in
polynomial time is $\Oh(\log\log(n))$~\cite{parinya21}. From the parameterized
perspective, it is known that the problem is $\text{W}[1]$-hard when
parameterized by $k$, the number of rectangles in the solution, even in the
unweighted setting and when all the rectangles are squares~\cite{marx05}.
Therefore, it is unlikely that there is an exact algorithm with a running time of the
form $f(k) n^{\Oh(1)}$, even in this restricted setting. In particular, this
also excludes any $(1-\eps)$-approximation algorithm running in $f(\eps)
n^{\Oh(1)}$ time~\cite{bazgan,cesati}. We note that in the case of weighted
squares, there is a PTAS with running time of the form
$n^{g(\eps)}$~\cite{erlebach05}.

Approximation of \textsc{MWISR} becomes much easier in the unweighted setting.
With this restriction, even constant factor approximation algorithms for
\textsc{MWISR} are known~\cite{mitchell21,galvez22}, and there is a QPTAS with a
better running time~\cite{chuzhoy16}.  Grandoni et al.~\cite{esa19} were the
first to consider parameterized approximation for the \textsc{MWISR} problem,
although in the unweighted setting. They gave a parameterized approximation
scheme for unweighted \textsc{MWISR} running in
$k^{\Oh(k/\eps^8)}n^{\Oh(1/\eps^8)}$ time. As an open problem, they asked if one
can also design a PAS in the weighted setting.

\begin{problem}[\cite{esa19}]
    Does {\sc{Maximum Independent Set of Rectangles}} admit a parameterized approximation scheme in the weighted setting?
\end{problem}

\paragraph*{Our contribution.}
In this paper we partially answer the open question of Grandoni et
al.~\cite{esa19} by proving the following result:

\begin{theorem}\label{thm:main-rect}
    Suppose $\Dd$ is a set of axis-parallel rectangles in the plane with positive
    weights. Then given~$k$~and $\eps>0$, one can in $2^{\Oh(k \log(k/\eps))}
    |\Dd|^{\Oh(1/\eps)}$ time find an independent set in $\Dd$ of weight at least
    $(1-\eps)\opt_k(\Dd)$.    
\end{theorem}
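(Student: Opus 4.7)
The plan is to combine an enumeration over a compact structural summary of an approximately optimal $k$-rectangle solution with a polynomial-time subroutine that solves the ``summary-constrained'' version of \textsc{MWISR}. The two factors $2^{\Oh(k\log(k/\eps))}$ and $|\Dd|^{\Oh(1/\eps)}$ in the target running time should correspond to these two phases: an outer enumeration of $\Oh(k)$ combinatorial choices, each from a universe of size $(k/\eps)^{\Oh(1)}$, and an inner polynomial-time algorithm whose exponent depends only on $\eps$.

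First I would prove a structural lemma of the following form: there exists an independent set $\Ss'$ with $\omega(\Ss') \geq (1-\eps)\opt_k(\Dd)$ together with a \emph{sketch} consisting of $\Oh(k)$ axis-parallel cuts (say, a recursive tree of cuts) whose coordinates lie on a canonical $\Oh(k/\eps)$-resolution grid, such that the sketch separates the $k$ ``landmark'' rectangles of $\Ss'$ into $k$ distinct cells of the induced plane subdivision. Because the algorithm is permitted to output more than $k$ rectangles, the sketch only needs to resolve the size-$k$ backbone of the optimum, not the full solution; the remaining rectangles are filled in by the inner subroutine. The $(1-\eps)$ loss is paid by a shifting argument: out of $\Oh(1/\eps)$ candidate grid offsets, at least one retains a $(1-\eps)$-fraction of the total weight.

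Second, I would enumerate all valid sketches, of which there are at most $2^{\Oh(k \log(k/\eps))}$, and for each sketch run a polynomial-time algorithm that returns a $(1-\eps)$-approximate independent set of rectangles compatible with the sketch (each selected rectangle lies entirely in one cell of the induced subdivision) in $|\Dd|^{\Oh(1/\eps)}$ time. The subroutine can be a DP over the cells combined with an $\eps$-shifting or corridor argument inside each cell; the crucial point is that once the sketch is fixed, the combinatorial complexity has been pushed into the outer enumeration, so the inner problem is tractable in the targeted time.

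The main obstacle is establishing the structural lemma in the weighted regime. In the unweighted case, Grandoni et al.\ use balanced separators that halve the cardinality of the working set, but in the weighted case a single heavy rectangle can prevent any weight-balanced partition using only a few cuts. I would bypass this by \emph{not} attempting to balance weight at all: balancedness is enforced only on the hypothesized size-$k$ backbone, which has at most $k$ rectangles, so an $\Oh(k)$-cut hierarchical decomposition always exists on it, independent of weights. Pinning down the precise form of this decomposition — so that its coordinates truly come from an $(k/\eps)^{\Oh(1)}$-size candidate set and so that the inner subroutine indeed runs in $|\Dd|^{\Oh(1/\eps)}$ rather than $|\Dd|^{\Oh(k)}$ — is where I expect the technical work to concentrate.
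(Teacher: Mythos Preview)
Your two-phase plan --- enumerate $2^{\Oh(k\log(k/\eps))}$ structural summaries, then spend $|\Dd|^{\Oh(1/\eps)}$ per summary --- matches the paper's architecture, but the content of both phases is missing the decisive ideas. For the outer phase, the claim that cut coordinates can be drawn from a ``canonical $\Oh(k/\eps)$-resolution grid'' via shifting is the crux, and it does not follow from any standard argument: the instance has no intrinsic scale, so a single thin heavy rectangle can be hit by every offset of any coarse grid. The paper obtains its grid not by shifting but by an input-dependent greedy sweep: repeatedly take a rectangle with leftmost right edge, add the vertical line through that edge, delete everything the line hits, and repeat (and symmetrically horizontally). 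The point is a dichotomy: either each sweep stops within $k^2/\eps$ lines, giving a grid of size $\le 2k^2/\eps$ in which \emph{every} rectangle of $\Dd$ contains a grid point; or it produces more than $k^2/\eps$ pairwise disjoint rectangles, whose total weight (after the preprocessing that bounds the max/min weight ratio by $k/\eps$) already exceeds $\opt_k(\Dd)$ and can be output directly. This second branch is exactly where the algorithm may return more than $k$ rectangles, and without it (or something equivalent) you have no $(k/\eps)^{\Oh(1)}$-size coordinate universe.

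For the inner phase, your sketch \emph{separates} the $k$ backbone rectangles into distinct cells, which makes the per-cell problem trivial (pick the heaviest rectangle contained in the cell) and leaves no role for a $|\Dd|^{\Oh(1/\eps)}$ subroutine or for the $(1-\eps)$ loss. The paper's summary --- for each backbone rectangle, the set of grid points it contains --- deliberately does \emph{not} separate the rectangles: two rectangles whose types occupy adjacent grid points can still overlap, so choosing one concrete rectangle per type is a genuine {\sc 2-VCSP}. Its Gaifman graph is planar up to $4$-cliques at grid-cell ``crossings''; the paper applies Baker's layering on this almost-planar graph, removing one BFS layer out of every $\lceil 1/\eps\rceil$ (this is where the $\eps$ is spent), and solves each remaining piece of treewidth $\Oh(1/\eps)$ by dynamic programming over domains of size $\le |\Dd|$, which yields the $|\Dd|^{\Oh(1/\eps)}$ factor. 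Your ``DP over cells with shifting inside each cell'' is not this mechanism, and as written there is nothing that would produce the $|\Dd|^{\Oh(1/\eps)}$ running time rather than $|\Dd|^{\Oh(1)}$ or $|\Dd|^{\Oh(k)}$.
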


Note that there is a caveat in the formulation above: the returned solution may
actually be of cardinality larger than $k$, but there is a guarantee that it
will be an independent set. This is what we mean by a ``partial'' resolution of
the question of Grandoni et al.~\cite{esa19}: ideally, we would like the
algorithm to return a solution of weight at least $(1-\eps)\opt_k(\Dd)$
{\em{and}} of cardinality at most $k$. At this point we are able to give such an
algorithm only in the restricted case of axis-parallel segments (see
Theorem~\ref{thm:main-segm} below), but let us postpone this discussion till
later and focus now on Theorem~\ref{thm:main-rect}.  Observe here that the issue
with solutions of cardinality larger than $k$ becomes immaterial in the
unweighted case, hence Theorem~\ref{thm:main-rect} applied to the unweighted
setting solves the problem considered by Grandoni et al.~\cite{esa19} and
actually improves the running time of their algorithm. 

We now briefly describe the technical ideas behind Theorem~\ref{thm:main-rect}.
Similarly to Grandoni et al.~\cite{esa19}, the starting point is a
polynomial-time construction of a grid such that each rectangle in $\Dd$
contains at least one gridpoint. However, in order to take care of the weights,
our grid is of size $(2k^2/\eps) \times (2k^2/\eps)$. Moreover, already in this
step we may return an independent set of weight at least $(1-\eps)\opt_k$ that
consists of more than $k$ rectangles. This is the only step where the algorithm
may return more than $k$~rectangles.

After this step, the similarities to the algorithm of Grandoni et
al.~\cite{esa19} end. We introduce the notion of the \emph{combinatorial type} of
a solution. This is simply a mapping from each rectangle in the solution to the set of
all  gridpoints contained in it. Observe that since the size of the grid is
bounded by a function of $k$ and $\eps$, we can afford to guess (by branching
into all possibilities) the combinatorial type of an optimum solution in
$f(k,\eps)$ time. Notice that there may be many different rectangles matching
the type of a rectangle from the optimum solution. However, it is possible that
such a rectangle overlaps with the neighboring rectangles (and violates
independence). Therefore, we need constraints that prevent rectangles from
overlapping.  For this, we construct an instance of \textsc{Arity-$2$ Valued
Constraint Satisfaction Problem} ({\sc{2-VCSP}}) based on the guessed
combinatorial type.

Next, we observe that this instance induces a graph that is ``almost'' planar,
hence we may apply a variant of Baker's shifting technique~\cite{baker94}. This
allows us to divide the instance into many independent instances of
{\sc{2-VCSP}} while removing only $\eps \cdot \opt_k$ weight from the optimum
solution. Moreover, each of these independent instances induces a graph of
bounded treewidth, and hence can be solved exactly in $|\Dd|^{\Oh(1/\eps)}$
time. This concludes a short sketch of our approach.

Let us return to the apparent issue that our algorithm may return a solution of
cardinality larger than $k$. This may happen in the very first step of the
procedure, during the construction of the grid. By employing a completely
different technique, we can circumvent this problem in the restricted setting of
axis-parallel segments and prove the following result.

\begin{theorem}\label{thm:main-segm}
    Suppose $\Dd$ is a set of axis-parallel segments in the plane with positive
    weights. Then given~$k$ and $\eps>0$, one can in
    $2^{\Oh(k^2\log(k/\eps))}|\Dd|^{\Oh(1)}$ time find an independent set in $\Dd$
    of cardinality at most $k$ and weight at least $(1-\eps)\opt_k(\Dd)$.
\end{theorem}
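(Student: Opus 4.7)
The plan is to exploit the fact that, unlike for rectangles, two axis-parallel segments conflict only in very restricted ways: two parallel segments conflict only if they are collinear and overlap, while two perpendicular segments conflict only if they cross at a single point. After normalizing away collinear overlaps by a straightforward preprocessing (on each horizontal line, solve a 1D weighted interval-scheduling problem with a cardinality bound; analogously for vertical lines), the conflict graph reduces to a bipartite structure between horizontal and vertical segments, and this is what drives the algorithm.

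First I would split the optimum $S^{\ast} = H^{\ast} \cup V^{\ast}$ into its horizontal and vertical parts and enumerate the cardinalities $a = |H^{\ast}|$ and $b = |V^{\ast}|$, so that $a+b \le k$. Then I would define a combinatorial \emph{sketch} $\tau$ of $S^{\ast}$ consisting of: the order of the slots of $H^{\ast}$ by $y$-coordinate, the order of the slots of $V^{\ast}$ by $x$-coordinate, and, for each pair $(h_i,v_j)$ of slots, a piece of \emph{separation data} that records which of the four certificates (e.g.\ $v_j$ is to the left of $h_i$'s $x$-span, above $h_i$, etc.) witnesses their non-crossing, together with a refinement assigning the endpoints of each slot to one of $\Oh(k/\eps)$ buckets along each axis. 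The total number of sketches to enumerate is then $2^{\Oh(k^2\log(k/\eps))}$, which matches the target exponential factor in the running time.

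For each fixed sketch $\tau$ I would then compute in time $|\Dd|^{\Oh(1)}$ the best realization of $\tau$ by actual segments from $\Dd$. Because $\tau$ prescribes, for every pair of slots, the bucketed coordinates and the side of separation with every other slot, the set of candidate segments in $\Dd$ that can be placed into each slot is pinned down purely by local constraints, while feasibility between slots factors along the prescribed orderings. I would therefore solve the realization problem by a dynamic program that processes the slots of $\tau$ in order of $y$-coordinate (maintaining, for each prefix of vertical slots, the choices of verticals compatible with the horizontals already assigned), or by an equivalent min-cost flow formulation on a bipartite auxiliary graph. Both formulations respect the cardinality bound $|H^{\ast}|+|V^{\ast}|\le k$, so the returned independent set has cardinality at most $k$.

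The main obstacle will be calibrating the resolution of the separation and bucket data. Too coarse and realizations of two distinct slots may still conflict; too fine and the sketch count exceeds the target. I would handle this by a shifting/averaging argument over the $\Oh(k/\eps)$ bucket boundaries: averaging over a random placement of boundaries, the expected weight of $S^{\ast}$ lost because one of its segments straddles a boundary (or because a finer-scale coordinate matters) is at most $\eps\cdot\opt_k(\Dd)$, so some fixed placement preserves a $(1-\eps)$ fraction. Enumerating all sketches and returning the best realization then yields an independent set of cardinality at most $k$ and weight at least $(1-\eps)\opt_k(\Dd)$, which is the guarantee absent from Theorem~\ref{thm:main-rect} because the bipartite conflict structure of segments (as opposed to arbitrary rectangles) is what allows each sketch to be realized without inflating the cardinality.
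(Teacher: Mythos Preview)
There is a genuine gap in the realization step. For a fixed sketch $\tau$ you must, for every horizontal slot $h_i$ and vertical slot $v_j$, enforce the specific separation certificate recorded in $\tau$ (e.g.\ ``$x(v_j)<\mathrm{left}(h_i)$''). Bucketing the endpoints into $\Oh(k/\eps)$ cells does not eliminate these constraints: whenever the two relevant coordinates fall in the same bucket, you still have a binary constraint linking the actual segment chosen for $h_i$ with the one chosen for $v_j$. The resulting {\sc 2-CSP} has $a+b\le k$ variables with domains of size up to $|\Dd|$, and its Gaifman graph can be the complete bipartite graph $K_{a,b}$, which has treewidth $\min(a,b)=\Theta(k)$; solving it requires $|\Dd|^{\Omega(k)}$ time, not $|\Dd|^{\Oh(1)}$. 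Your DP sketch confirms this: a state that records ``the choices of verticals compatible with the horizontals already assigned'' already has $|\Dd|^{\Theta(k)}$ states. The shifting argument does not rescue the situation, because segments have arbitrary lengths relative to the bucket width, so there is no scale against which a random shift of $\Oh(k/\eps)$ boundaries makes the probability of a ``bad'' coincidence $\Oh(\eps/k)$; a long horizontal segment will straddle many boundaries for every shift. (The preprocessing step ``normalize away collinear overlaps by 1D interval scheduling on each line'' is also unsound: how many optimal segments lie on a given line is unknown, and the choice on one line interacts with perpendicular segments.)

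For comparison, the paper avoids this obstacle by shifting the approximation entirely into the weights: it rounds to $W=\Oh(\eps^{-1}\log(k/\eps))$ distinct weight values (the only lossy step) and then solves the instance \emph{exactly} in $(kW)^{\Oh(k^2)}|\Dd|^{\Oh(1)}$ time. The exact algorithm builds, by $\le k$ rounds of branching, a grid $G$ of size $\Oh(k^2)$ that is \emph{nice} for some optimum (every optimal segment contains a grid point). The key branching step guesses the combinatorial type of the heaviest optimal segment not yet respecting $G$ and adds at most $k$ lines so that either that segment now respects the grid, or an alternative optimum of ugliness $0$ is exhibited. Once a nice grid is available, the induced {\sc 2-CSP} has Gaifman graph a disjoint union of paths (parallel segments on the same grid line), hence treewidth~$1$, and is solvable in $|\Dd|^{\Oh(1)}$ time --- this is exactly the structural reduction that your sketch-based approach fails to achieve.
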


K{\'{a}}ra and Kratochv{\'{\i}}l~\cite{kratochvil06} and Marx~\cite{marx06}
independently observed that the problem of finding a maximum cardinality
independent set of axis-parallel segments admits an fpt algorithm. However,
their approach heavily relies on the fact that the problem is unweighted. In
this setting our approach is different: In fact, we show that finding a maximum
weight independent set of axis-parallel segments admits an algorithm with
running time $W^{\Oh(k^2)} |\Dd|^{\Oh(1)}$, where $W$ is the number of distinct
weights present among the segments.

We proceed with an outline of the proof of Theorem~\ref{thm:main-segm} and highlight
some technical ideas.  First, we modify the instance so that the number of
different weights is bounded.  This is done through guessing the largest weight
of a rectangle in an optimum solution and rounding the weights down.  This is
the only place where we lose accuracy on the optimal solution.  In other words,
the algorithm is fixed-parameter tractable in $k$ and the number of distinct
weights $W = (k/\eps)^{\Oh(1)}$.

With this assumption, we then construct a grid with $\Oh(k^2)$ lines hitting every segment of the instance.
We say that the grid is \emph{nice} with respect to a segment $I$, if $I$ contains a grid point;
equivalently, $I$ is nice if it is hit by two orthogonal lines of the grid.
Observe that the constructed grid is not necessarily nice for every segment of the instance.
We adapt the previously introduced notion of the \emph{combinatorial type} in order to also accommodate segments which do not contain a grid point.
This is done by mapping the segment to its four neighboring grid lines instead of the grid points contained inside the segment.
Further, the weight of the segment is added to its combinatorial type.
Similarly to before, the combinatorial type of a segment only depends on the grid size and the number of distinct weights.
This allows to guess (by branching into all possibilities) the combinatorial type of the optimum solution $\Ss$ in $k^{\Oh(1)}\cdot W$ time.

The goal is to construct a grid which is nice with respect to all segments of an optimal solution $\Ss$.
For this, we start by guessing the combinatorial type of all nice segments of an optimal solution $\Ss$.
Then, we incrementally guess the combinatorial type of the next heaviest segment in $\Ss$ for which the grid is not yet nice.
For each such combinatorial type, we find all possible candidate segments and add at most $k$ lines to the grid $G$.
This ensures that a correct candidate segments is hit both directions.
Repeating this procedure at most $k$ times we end up with a grid which is nice with respect to all the segments of $\Ss$.

Given such a grid, it remains to guess the combinatorial type of all segments in $\Ss$ and solve resulting instance.
This can be done either greedily or by observing that the problem can be modeled as a $2$-CSP instance whose constraint graph is a union of paths.
Both these cases work due to the fact that the segments only interact with each other when they lie on the same grid line.

\newcommand{\dist}{\mathrm{dist}}
\newcommand{\bag}{\mathsf{bag}}

\section{Preliminaries}

Through the paper, we silently assume that every weight is positive (because we
work on maximization, we can simply disregard objects of non-positive weights).
A {\em{tree decomposition}} of a graph $H$ is a tree $T$ together with a function $\bag$ that maps nodes of $T$ to subsets of vertices of $H$, called {\em{bags}}. The following conditions must be satisfied:
\begin{itemize}[nosep]
 \item 
for every vertex $u$ of $H$, the nodes of $T$ whose bags contain $u$ must form a connected, nonempty subtree of $T$; and
\item
for every edge $uv$ of $H$, there must exist a node of $T$ whose bag contains both $u$ and $v$.
\end{itemize}
The {\em{width}} of a tree decomposition $(T,\bag)$ is $\max_{x\in V(T)}|\bag(x)|-1$. The {\em{treewidth}} of $H$ is the minimum possible width of a tree decomposition of $H$. By $\dist_H(u,v)$ we mean the distance between vertices $u$ and $v$ in a graph $H$.

We use well-known results about {\sc{Arity-$2$ Valued Constraint
Satisfaction Problems}} ({\sc{2-VCSP}}s). An instance of {\sc{2-VCSP}} is a finite set of
\emph{variables} $X$, a domain $D_x$ for each variable $x\in X$, and a
set $C$ of \emph{constraints}. Each constraint $c\in C$ binds an ordered pair $\textbf{x}_c\in X\times X$ of variables in $X$ (not necessarily distinct) and is given by a mapping $f_c \colon
\prod_{x\in \textbf{x}_c} D_x \rightarrow \real$.
We assume that $f_c$ is given as the set of pairs
$\{(\textbf{d},f_c(\textbf{d})) \colon \textbf{d} \in \prod_{x\in \textbf{x}_c} D_x\}$. The goal is to compute the maximum value of the function $f(\textbf{u}) \coloneqq
\sum_{c\in C} f_c(\textbf{u}|_{\textbf{x}_c})$, over all possible assignments $\textbf{u}\in \prod_{x\in X} D_x$ of
values in respective domains to variables in $X$. The value $f(\mathbf{u})$ will be called the {\em{revenue}} of the assignment $\mathbf{u}$.

Observe that each instance of {\sc{2-VCSP}} induces an undirected graph, called the {\em{Gaifman graph}}: the vertex set is the set of variables~$X$, and for every pair of distinct variables $x,y\in X$, there is an edge $xy$ if and only if there is a constraint $c\in C$ such that $\textbf{x}_c=(x,y)$. Given a class $\Hh$ of graphs,
we can define a restriction of {\sc{2-VCSP}} to $\Hh$ by focusing only on instances whose Gaifman
graph is in $\Hh$. In this paper we focus only on instances of {\sc{2-VCSP}} where
the Gaifman graph has bounded treewidth. In this setting, it is well-known that a
standard dynamic programming solves {\sc{2-VCSP}}
efficiently~\cite{freuder90}\footnote{Freuder~\cite{freuder90} actually 
considered only the unweighted 2-CSP, however, as pointed out in e.g.,
\cite{zivny1,zivny2}, this dynamic-programming approach can be adapted to the weighted setting. Also, Freuder assumes that a suitable tree decomposition is given on input. Such a tree decomposition can be provided within the stated time complexity by, for instance, the $4$-approximation algorithm of Robertson and Seymour~\cite{RobertsonS95b}.}.

\begin{theorem}[\cite{freuder90}]
    \label{thm:vcsp}
    {\sc{2-VCSP}} can be solved in time $\Delta^{\Oh(t)}\cdot |X|^{\Oh(1)}$ when the
    Gaifman graph has treewidth at most $t$ and all domains are of size at most $\Delta$.
\end{theorem}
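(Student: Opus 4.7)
The plan is to execute a standard bottom-up dynamic programming over a tree decomposition of the Gaifman graph, after first obtaining such a decomposition. Using the $4$-approximation of Robertson and Seymour~\cite{RobertsonS95b} (as remarked in the footnote), we compute in polynomial time a tree decomposition $(T,\bag)$ of the Gaifman graph of width at most $4t+3$. We then preprocess it into a \emph{nice} tree decomposition, rooted with empty bag, leaves having empty bags, and every internal node classified as introduce, forget, or join, of the same width and with $|V(T)|=\Oh(t\cdot|X|)$ nodes; this standard transformation runs in polynomial time and does not increase the width.

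To set up the DP, for each node $x\in V(T)$ let $V_x$ denote the union of bags in the subtree rooted at $x$, and for each constraint $c\in C$ let $\mathrm{host}(c)$ be the topmost node of $T$ whose bag contains both coordinates of $\mathbf{x}_c$; such a node exists by the edge-covering property of tree decompositions. For every node $x$ and every assignment $\alpha\colon \bag(x)\to\bigcup_v D_v$ with $\alpha(v)\in D_v$, I define
\[
M[x,\alpha]=\max\Bigl\{\, \sum_{c:\,\mathrm{host}(c)\text{ is a descendant of }x\text{ (inclusive)}}\!\! f_c(\beta|_{\mathbf{x}_c}) \;:\; \beta\colon V_x\to\textstyle\bigcup_v D_v,\ \beta|_{\bag(x)}=\alpha\Bigr\}.
\]
The role of $\mathrm{host}$ is to charge every constraint to exactly one node, so that the sums taken in $M[x,\alpha]$ compose cleanly as we move up the tree. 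The DP recurrences then follow the standard node-type case analysis: at a leaf, $M[x,\emptyset]=0$; at an introduce node adding $v$ with child $y$, $M[x,\alpha]=M[y,\alpha|_{\bag(y)}]+\sum_{c:\mathrm{host}(c)=x} f_c(\alpha|_{\mathbf{x}_c})$; at a forget node removing $v$ with child $y$, $M[x,\alpha]=\max_{d\in D_v} M[y,\alpha\cup\{v\mapsto d\}]$; at a join node with children $y_1,y_2$ sharing the bag of $x$, $M[x,\alpha]=M[y_1,\alpha]+M[y_2,\alpha]+\sum_{c:\mathrm{host}(c)=x} f_c(\alpha|_{\mathbf{x}_c})$. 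The returned value is $M[r,\emptyset]$ at the root~$r$; correctness follows by induction on~$T$, since the $\mathrm{host}$ map guarantees each $f_c$ is added exactly once across the whole run.

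For the running time, each table $M[x,\cdot]$ has at most $\Delta^{|\bag(x)|}\le\Delta^{4t+4}$ entries, and each entry is computed in time polynomial in $t$ and the total size of the $f_c$'s hosted at $x$ (recall that each $f_c$ is given as an explicit table of size at most $\Delta^2$). Amortizing over all nodes, the total computation is $\Delta^{\Oh(t)}\cdot|X|^{\Oh(1)}$, matching the claimed bound, and incorporating the polynomial cost of computing the tree decomposition and its nice version does not change this.

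The main obstacle is purely bookkeeping: ensuring that each constraint $c$ contributes $f_c$ to $M[x,\alpha]$ at exactly one node and that the value $f_c$ used there is consistent with the assignment extension recorded at all ancestors. Introducing the $\mathrm{host}(c)$ node and processing each constraint only at its host cleanly resolves this, since once $\mathrm{host}(c)=x$ the endpoints of $c$ are both in $\bag(x)$ and hence the value $f_c(\alpha|_{\mathbf{x}_c})$ is determined by the current table index $\alpha$ and requires no further coordination with descendants or ancestors.
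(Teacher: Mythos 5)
The paper itself does not prove this theorem---it is imported from Freuder with a footnote saying that the standard tree-decomposition dynamic programming adapts to the weighted setting and that the decomposition can be supplied by the Robertson--Seymour approximation---so your write-up is precisely the argument the authors have in mind, and its overall structure (nice decomposition, tables indexed by bag assignments, charging each constraint to a unique node) is sound. There is, however, one concrete flaw in the recurrences. You define $\mathrm{host}(c)$ as the \emph{topmost} node whose bag contains both coordinates of $\mathbf{x}_c$, but you add the term $\sum_{c:\,\mathrm{host}(c)=x} f_c(\alpha|_{\mathbf{x}_c})$ only at introduce and join nodes. The host can perfectly well be a \emph{forget} node: its defining property is only that its parent drops one of the two endpoints, which says nothing about the host's own type. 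For example, take the Gaifman graph with edges $uv$ and $vw$ and the nice decomposition whose bags, read from leaf to root, are $\emptyset,\{u\},\{u,v\},\{u,v,w\},\{u,v\},\{v\},\emptyset$. This is a valid nice tree decomposition, and the host of the constraint on $(u,v)$ is the second node with bag $\{u,v\}$, which is a forget node (forgetting $w$); as written, your DP never adds $f_{uv}$ anywhere, so the computed optimum is wrong. The same issue arises for the self-binding constraints $\mathbf{x}_c=(x,x)$ that the paper uses to model unary revenues.

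The repair is immediate and does not disturb anything else: if a forget node $x$ hosts $c$, then both coordinates of $\mathbf{x}_c$ lie in $\bag(x)$ (in particular neither is the forgotten variable $v$), so $f_c(\alpha|_{\mathbf{x}_c})$ is determined by the table index $\alpha$; just add $\sum_{c:\,\mathrm{host}(c)=x} f_c(\alpha|_{\mathbf{x}_c})$ to the forget recurrence as well, outside the maximization over $d\in D_v$. With that correction the rest of your argument goes through: hosts are unique so join nodes cannot double-count, each table has at most $\Delta^{4t+4}$ entries, and the total running time is $\Delta^{\Oh(t)}\cdot|X|^{\Oh(1)}$ as claimed.
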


In Section~\ref{sec:segm} we also use standard {\sc{$2$-CSP}}s. These can be modeled by {\sc{$2$-VCSP}}s where all the constraints are hard: revenue functions $f_c$ assign only values $0$ (the constraint is satisfied) or $-\infty$ (the constraint is not satisfied). The task is to find a variable assignment that satisfies all constraints, that is, yields revenue~$0$.

\newcommand{\Hb}{{H^\bullet}}

\section{Axis-parallel rectangles}\label{sec:rect}

In this section we prove Theorem~\ref{thm:main-rect}. Therefore, we fix the given set $\Dd$ of weighted axis-parallel rectangles. For a rectangle $R \in \Dd$, the weight of $R$ is
$\omega(R) \in \real$.
By $\opt_k(\Dd)$ we denote the maximum possible weight of a set consisting of at most $k$ disjoint rectangles in $\Dd$. We also fix any {\em{optimum solution}}, that is, a set
$\Ss \subseteq \Dd$ of cardinality at most $k$ satisfying $\omega(\Ss)=\opt_k(\Dd)$.

We start with a simple preprocessing on $\Dd$. First, we guess a rectangle $R_{\max} \in \Ss$ with maximum weight among all rectangles of $\Ss$. This can be done with an extra overhead of $\Oh(n)$ in
the running time. Observe that $\omega(R_{\max}) \ge \opt_k(\Dd)/k$. Further, we
remove from $\Dd$ every rectangle of weight larger than $ \omega(R_{\max})$ and
every rectangle of weight not exceeding $\eps \omega(R_{\max})/k$; let the obtained instance be $\Dd'$. Observe that this operation does not decrease
the optimum significantly, as none of the former rectangles and at most $k$ of the latter rectangles could be used in $\Ss$.
More precisely, we have
\begin{displaymath}
    \opt_k(\Dd') \ge \opt_k(\Dd) - k \cdot \frac{\eps\cdot \omega(R_{\max})}{k} \ge (1-\eps) \opt_k(\Dd).
\end{displaymath}
After this preprocessing, the optimum decreased by at most
$\eps \cdot \opt_k(\Dd)$. This concludes the description of preprocessing. From now
on, we silently assume that our instance is~$\Dd \coloneqq \Dd'$.

\subsection{Constructing a grid}

\newcommand{\pnts}{\mathsf{points}}
 
We first introduce relevant terminology.
 
\begin{definition}
    A {\em{grid}} is a finite set of horizontal and vertical lines in the plane.  The
    \emph{size} $|G|$ of a grid $G$ is the total number of
     lines it contains. A {\em{grid point}} of $G$ is the intersection of a horizontal and a vertical line of~$G$. The set of grid points of $G$ is denoted by $\pnts(G)$. The lines of the grid divide the plane into {\em{grid cells}}. Thus, each grid cell is a rectangle, possibly with one or two sides extending to infinity, and at most four {\em{corners}}: the grid points lying on its boundary.
     
    A grid $G$ is \emph{good} for a set of axis-parallel rectangles $\Dd$, if for
    every rectangle $R \in \Dd$ there is a grid point of $G$ contained in $R$.
\end{definition}

As mentioned in Section~\ref{sec:intro}, our search for an optimal solution pivots around a bounded size grid that is good for the optimum solution. The construction of this grid is encapsulated in the following lemma.

\begin{lemma}
    \label{lem:step1}
    Suppose we are given a set $\Dd$ of weighted axis-parallel rectangles and let $\Delta(\Dd)$ be the ratio between lowest and highest weight in $\Dd$.
    Then, supposing $\Delta(\Dd)\ge \eps/k$ for some $\eps>0$, one can, in polynomial time, either
    \begin{itemize}[nosep]
        \item compute a grid $G$ of size $|G| \le \frac{2k^2}{\eps}$ that is good for $\Ss$, or
        \item return an independent set $\Ii \subseteq \Dd$ with  $\omega(\Ii) \ge \mathrm{opt}_k(\Dd)$.
    \end{itemize} 
\end{lemma}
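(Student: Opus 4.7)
The plan is to construct the vertical and horizontal line sets of $G$ separately, each of size at most $k^2/\eps$. I describe the vertical part; the horizontal part is symmetric. I project every rectangle $R = [x_1,x_2]\times[y_1,y_2]$ in $\Dd$ onto the $x$-axis to obtain the interval $[x_1,x_2]$, and then run the classical greedy interval-stabbing algorithm on this family: sort the intervals by right endpoint and sweep left to right, placing a stab point at the right endpoint of every interval not yet pierced. In polynomial time this produces a stabbing set $P_x$ of some size $m_x$, and simultaneously a family of $m_x$ pairwise disjoint intervals, namely the ones whose right endpoints were chosen (any two such intervals are disjoint because the later one is, by construction, not pierced by the earlier stab point, so its left endpoint lies strictly to the right). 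I perform the analogous procedure on the $y$-projections to obtain $P_y$ and a disjoint family of size $m_y$.

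If $m_x \le k^2/\eps$ and $m_y \le k^2/\eps$, I output the grid $G$ whose vertical lines have the $x$-coordinates from $P_x$ and whose horizontal lines have the $y$-coordinates from $P_y$. Then $|G| \le 2k^2/\eps$. For every rectangle $R \in \Dd$, some vertical line of $G$ lies in the $x$-strip of $R$ and some horizontal line of $G$ lies in the $y$-strip of $R$, so their intersection is a grid point inside $R$. Hence $G$ is good for $\Dd$, and in particular for $\Ss \subseteq \Dd$.

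Otherwise, without loss of generality $m_x > k^2/\eps$, and I take $\Ii$ to consist of any $\ceil{k^2/\eps}$ rectangles of $\Dd$ whose $x$-projections form part of the disjoint family above; since these $x$-projections are pairwise disjoint, the rectangles are pairwise disjoint and $\Ii$ is indeed independent. Let $W_{\max}$ be the maximum weight in $\Dd$. The hypothesis $\Delta(\Dd) \ge \eps/k$ guarantees that every rectangle of $\Dd$ has weight at least $\eps W_{\max}/k$, while $\opt_k(\Dd) \le k\cdot W_{\max}$ since $\opt_k(\Dd)$ is the weight of some independent set of size at most $k$. Therefore
\[
\omega(\Ii) \ge \frac{k^2}{\eps}\cdot\frac{\eps\,W_{\max}}{k} = k\cdot W_{\max} \ge \opt_k(\Dd),
\]
as required.

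The only delicate point is the simultaneous extraction of a stabbing set and a disjoint family of the same size from a single greedy sweep; this is the classical König-type duality for interval hypergraphs, and it is immediate from the construction. Everything else is direct weight bookkeeping, exploiting exactly the bounded ratio $\Delta(\Dd) \ge \eps/k$ guaranteed by the preprocessing step preceding the lemma.
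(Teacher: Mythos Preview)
Your proof is correct and follows essentially the same approach as the paper: both run the greedy interval-stabbing procedure on the $x$- and $y$-projections of the rectangles, obtaining simultaneously a stabbing set (the grid lines) and an equally-sized disjoint family (the candidate independent set), and then apply the same weight calculation using the bounded ratio $\Delta(\Dd)\ge \eps/k$. The paper phrases the greedy step directly on rectangles rather than on projected intervals, but the underlying algorithm and analysis are identical.
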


\begin{proof}
    We construct the grid $G$ by first constructing the vertical lines of
    $G$, and then with basically the same procedure we add the horizontal lines of $G$. For the construction of the vertical lines, we iteratively pick vertically disjoint rectangles in a greedy
    fashion. For every rectangle $R \in \Dd$, select a point $p_R\in R$ very close to the top-right corner of $R$. We start with $\Dd_1 \coloneqq \Dd$.  In iteration
    $i \in \nat$, we select a rectangle $R_i^{\mathrm{ver}}\in \Dd_i$ for which $p_i\coloneqq p_{R_i^{\mathrm{ver}}}$ is the leftmost among rectangles of $\Dd_i$. (In case of ties, select any of the tying rectangles.) Then, add the vertical line
    $\ell^{\mathrm{ver}}_i$ which contains $p_i$ to the grid. Next, delete every rectangle from $\Dd_i$
    intersecting $\ell^{\mathrm{ver}}_i$, thus obtaining the next set $\Dd_{i+1}$. We repeat this procedure until
    no more rectangles are left in $\Dd_i$.
    To finish the
    construction of $G$, repeat the above algorithm in the orthogonal direction, thus selecting vertically disjoint rectangles $R_i^{\mathrm{hor}}$ and adding to $G$ horizontal lines $\ell_i^{\mathrm{hor}}$. This concludes the construction of $G$; see
    Figure~\ref{fig:grid} for an illustration.
    
    \begin{figure}[ht!]
        \centering
        \includegraphics[width=0.4\textwidth]{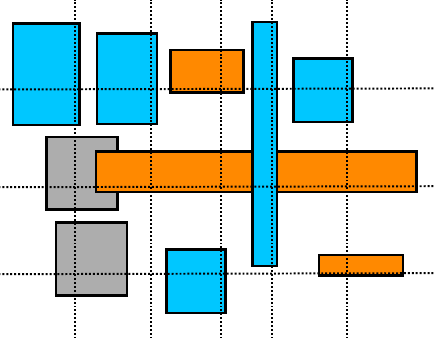}
        \caption{The grid constructed after applying the greedy procedure.
        Rectangles $R^{\mathrm{ver}}_i$ are
    blue-filled, and rectangles $R^{\mathrm{hor}}_i$ are orange-filled. Observe that every rectangle is hit by at least one grid point.}
        \label{fig:grid}
    \end{figure}

    Trivially, the above algorithm runs in polynomial time. Moreover, it returns
    a good grid since every rectangle in $\Dd$ is intersected by some horizontal and some vertical line from $G$. So if $|G|\leq \frac{2k^2}{\eps}$, we can just return $G$ as the output of the algorithm.
    
    It remains to show that if $|G| > \frac{2k^2}{\eps}$, then we can find an independent set of
    weight at least $\opt_k(\Dd)$.
    Assuming that $|G| > \frac{2k^2}{\eps}$, either the vertical or the horizontal run of the greedy algorithm returned more than $\frac{k^2}{\eps}$ lines.
    Without loss of generality assume that the vertical run constructed
    rectangles $R^{\textrm{ver}}_1,\ldots,R^{\textrm{ver}}_m$ for some $m > \frac{k^2}{\eps}$. Observe that
    these rectangles form an independent set, because after iteration $i \in [m]$ all
    rectangles with left side to the left of $\ell_i$ are removed. Since we assumed that the
    ratio between lowest and highest weight of a rectangle in $\Dd$ is at least $\eps/k$, we may estimate the weight of
    $\{R^{\textrm{ver}}_1,\ldots,R^{\textrm{ver}}_m\}$ as follows:
    \begin{displaymath}
        \sum_{i=1}^m \omega(R^{\textrm{ver}}_i) \ge m \cdot \frac{\eps \cdot \omega(R_{\max})}{k} \ge
        k \cdot \omega(R_{\max}) \ge \opt_k(\Dd),
    \end{displaymath}
    where $R_{\max}$ is the rectangle of highest weight in $\Dd$.
    Therefore, the rectangles $R^{\textrm{ver}}_1,\ldots,R^{\textrm{ver}}_m$ form a feasible output for the second point of the lemma statement.
\end{proof}

The first step of the algorithm is to run the procedure of Lemma~\ref{lem:step1}. If this procedure returns an independent set of weight at least $\mathrm{opt}_k(\Dd)$, we just return it as a valid output and terminate the algorithm. Otherwise, from now on we may assume that we have constructed a grid $G$ of size at most $2k^2/\eps$ and this grid is good for $\Ss$.

\subsection{Combinatorial types}

Next, we define the notion of the combinatorial type of a rectangle with respect to a grid. This can be understood as a rough description of the position of the rectangle with respect to the grid.

\begin{definition}[Combinatorial Type]
    Let $G$ be a grid. For an axis-parallel rectangle $R$, we define the {\em{combinatorial type}} $T(R)$ of $R$ with respect to $G$ as
    \begin{displaymath}
       T_G(R)\coloneqq R\cap \pnts(G). 
    \end{displaymath}
    In other words, $T_G(R)$ is the set of grid points of $G$ contained in $R$. For a set $\Ss$ of axis-parallel rectangles, the {\em{combinatorial type}} of $\Ss$ is $T_G(\Ss)$, that is, the image of $\Ss$ under $T_G$. By $\Lambda_k^G$ we denote the set of all possible combinatorial types with respect to $G$ of sets $\Ss$ consisting of at most $k$ axis-parallel rectangles.
\end{definition}

We now observe that if a grid is small, there are only few combinatorial types on it.

\begin{lemma}
    \label{lem:step2}
    For every grid $G$ and positive integer $k$, we have $|\Lambda_k^G|\leq 2^{\Oh(k\log |G|)}$. Moreover, given $G$ and $k$, $\Lambda_k^G$ can be constructed in time $2^{\Oh(k\log |G|)}$.
\end{lemma}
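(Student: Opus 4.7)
The plan is to prove the cardinality bound by first counting the possible combinatorial types of a \emph{single} rectangle with respect to $G$, and then noting that a type in $\Lambda_k^G$ is nothing more than a set of at most $k$ such single-rectangle types. The algorithmic part will follow by a direct enumeration.

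For the first step, I would observe the following structural fact: for any axis-parallel rectangle $R=[x_1,x_2]\times[y_1,y_2]$, the set $T_G(R)=R\cap \pnts(G)$ is either empty or has the form of a ``grid-aligned subarray'' of $\pnts(G)$. More precisely, let $X_1<X_2<\ldots<X_a$ be the $x$-coordinates of the vertical lines of $G$ and $Y_1<Y_2<\ldots<Y_b$ be the $y$-coordinates of the horizontal lines of $G$, so that $a+b=|G|$. Then, whenever $T_G(R)\ne\emptyset$, there exist indices $1\le i_1\le i_2\le a$ and $1\le j_1\le j_2\le b$ such that
\[
T_G(R)\;=\;\{(X_i,Y_j)\,:\, i_1\le i\le i_2,\ j_1\le j\le j_2\}.
\]
Consequently, the total number of values that $T_G(R)$ can take, over all axis-parallel rectangles $R$, is at most $1+\binom{a+1}{2}\binom{b+1}{2}=\Oh(|G|^4)$; let $N$ denote this quantity.

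For the second step, since $T_G(\Ss)=\{T_G(R)\,:\, R\in\Ss\}$ is a subset of this collection of size at most $|\Ss|\le k$, we obtain
\[
|\Lambda_k^G|\;\le\;\sum_{i=0}^{k}\binom{N}{i}\;\le\;(N+1)^k\;=\;2^{\Oh(k\log|G|)},
\]
as claimed. For the construction, one first enumerates all $N=\Oh(|G|^4)$ possible single-rectangle types by iterating over all quadruples $(i_1,i_2,j_1,j_2)$ (plus the empty type) and explicitly producing the corresponding subsets of $\pnts(G)$; this takes time polynomial in $|G|$ and~$k$. Then one enumerates all subsets of size at most $k$ from the resulting collection by a standard recursive procedure, with total running time proportional to the output size times a polynomial factor, and hence bounded by $2^{\Oh(k\log|G|)}$.

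I do not expect a real obstacle here: the proof is essentially a counting argument once one notices the ``grid-aligned subarray'' structure of $T_G(R)$, which is immediate from axis-parallelism. The only mild point to keep in mind is the edge case $T_G(R)=\emptyset$, which contributes the additive ``$+1$'' in the count of single-rectangle types and is harmless.
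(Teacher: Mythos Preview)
Your proposal is correct and essentially identical to the paper's own proof: both observe that a single rectangle's combinatorial type is determined by at most four grid lines (hence there are $\Oh(|G|^4)$ possibilities), and then bound $|\Lambda_k^G|$ by counting collections of at most $k$ such types. The only cosmetic difference is that the paper counts ordered tuples while you count subsets via binomial coefficients, which is immaterial for the $2^{\Oh(k\log|G|)}$ bound.
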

\begin{proof}
    The combinatorial type of any axis-parallel rectangle $R$ can be completely characterized by four lines (or lack thereof) in $G$: the left-most and the right-most vertical line of $G$ intersecting $R$, and the top-most and the bottom-most horizontal line of $G$ intersecting $R$. Hence, there are at most $(|G|+1)^4$ candidates for the combinatorial type of a single rectangle. It follows that the number of combinatorial types of sets of at most $k$ rectangles is bounded by 
    \begin{displaymath}
       1+(|G|+1)^4+(|G|+1)^8+\ldots+(|G|+1)^{8k}\in 2^{\Oh(k\log |G|)}. 
    \end{displaymath}
    To construct $\Lambda_k^G$ in time $2^{\Oh(k\log |G|)}$, just enumerate all possibilities as above.
\end{proof}

The next step of the algorithm is as follows. Recall that we work with a grid $G$ of size at most $2k^2/\eps$ that is good for $\Ss$. By Lemma~\ref{lem:step2}, we can compute $\Lambda_k^G$ in time $2^{\Oh(k\log (k/\eps))}$ and we have $|\Lambda_k^G|\leq 2^{\Oh(k\log (k/\eps))}$. Hence, by paying a $2^{\Oh(k\log (k/\eps))}$ overhead in the time complexity, we can guess $\Tt\coloneqq T_G(\opt_k(G))$, that is, the combinatorial type of the optimum solution. Hence, from now on we assume that the combinatorial type $\Tt$ is fixed. Since $\Ss$ is an independent set and $G$ is good for $\Ss$, we may assume that sets in $\Tt$ are pairwise disjoint and nonempty.

\subsection{Reduction to 2-VCSP}

We say that a rectangle $R \in \Dd$ \emph{matches} a subset of grid points $A\subseteq \pnts(G)$ if $T_G(R)=A$, that is, $R \cap \pnts(G) = A$. By $\Dd_A\subseteq \Dd$ we denote the set of rectangles from $\Dd$ that match $A$.

Based on the combinatorial type $\Tt$ we define an instance $I_\Tt$ of
{\sc{2-VCSP}} as follows. The set of variables is $\Tt$. For every $A\in \Tt$, the domain of $A$ is $\Dd_A\cup \{\bot\}$. That is, the set of rectangles from $\Dd$ that match $A$ plus a special symbol $\bot$ denoting that no rectangle matching $A$ is taken in the solution. Also, for every $A\in \Tt$ we add a unary\footnote{Formally, in the definition of {\sc{2-VCSP}} we allowed only binary constraints, but unary constraints --- constraints involving only one variable --- can be modelled by binary constraints binding a variable with itself.} constraint $c_A$ on $A$ with associated revenue function $f_{c_A}\colon \Dd_A\cup \{\bot\}\to \real$ defined as $f_{c_A}(R)=\omega(R)$ for each $R\in \Dd_A$ and $f_{c_A}(\bot)=0$.


\begin{figure}[ht!]
    \centering
    \includegraphics[width=0.4\textwidth]{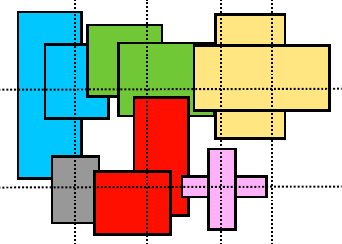}
    \caption{The instance after guessing the combinatorial type $\Tt$. Rectangles that
    match the same type $A\in \Tt$ are filled with the same color. Each
variable corresponds to a different rectangle of the optimum solution, equivalently to a different type $A\in \Tt$, equivalently to a different color in the figure. The domain of a variable consists of all rectangles in the corresponding color. }
    \label{fig:graph}
\end{figure}

It remains to define binary constraints binding pairs of distinct variables in $I_\Tt$.
Two distinct grid points of $G$ are \emph{adjacent} if
they lie on the same or on consecutive horizontal lines of $G$, and on the same or on consecutive vertical lines of $G$.
We put a binary constraint $c_{A,B}$ binding variables $A\in \Tt$ and $B\in \Tt$ if there exist grid points $p \in
A$ and $q \in B$ that are adjacent. The revenue function for $c_{A,B}$ is defined as follows: for $R_A\in \Dd_A\cup \{\bot\}$ and $R_B\in \Tt_B\cup \{\bot\}$, we set
\begin{displaymath}
   f_{c_{A,B}}(R_A,R_B)=\begin{cases}-\infty & \textrm{if }R_A\in \Dd_A\textrm{ and }R_B\in \Dd_B\textrm{ intersect;}\\0 & \textrm{otherwise.}\end{cases}. 
\end{displaymath}
In other words, $c_{A,B}$ is a hard constraint: we require that the rectangles assigned to $A$ and $B$ are disjoint (or one of them is nonexistent), as otherwise the revenue is $-\infty$. This concludes the construction of the instance of $I_\Tt$; clearly, it can be done in polynomial time.


The instance $I_\Tt$ is constructed so that it corresponds to the problem of selecting disjoint rectangles from $\Dd$ that match the combinatorial type $\Tt$. This is formalized in the following statement.

\begin{claim}
    \label{obs:equiv}
    If $\Ss \subseteq \Dd$ is an independent set of rectangles of combinatorial
    type $\Tt$, then there exists a solution to $I_\Tt$ with revenue equal to
    $\omega(\Ss)$. Conversely, if there exists a solution to $I_\Tt$ with
    revenue $r\geq 0$, then there exists an independent set $\Ss\subseteq \Dd$ of weight $r$ and cardinality at most $k$.
\end{claim}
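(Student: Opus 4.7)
Both implications are natural translations between solutions of the rectangle-packing problem and assignments to $I_\Tt$; the only nontrivial content is one geometric observation used in the backward direction. For the forward implication, suppose $\Ss$ is an independent set with $T_G(\Ss)=\Tt$. First observe that $T_G$ is injective on $\Ss$: if $R,R'\in \Ss$ satisfied $T_G(R)=T_G(R')=A$, then any $p\in A$ (which exists because the sets in $\Tt$ are nonempty) would lie in $R\cap R'$, contradicting independence. Thus for each $A\in \Tt$ there is a unique $R_A\in \Ss$ with $T_G(R_A)=A$, and we can set $\mathbf{u}(A)\coloneqq R_A$. The unary constraint $c_A$ contributes $\omega(R_A)$, while every binary constraint $c_{A,B}$ contributes $0$ since $R_A\cap R_B=\emptyset$. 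Summing gives revenue $\omega(\Ss)$, as required.

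For the backward implication, let $\mathbf{u}$ be an assignment of revenue $r\ge 0$. Since $r>-\infty$, no hard constraint is violated. Define
\[
\Ss\coloneqq\{\mathbf{u}(A):A\in \Tt,\ \mathbf{u}(A)\ne \bot\}.
\]
Clearly $|\Ss|\le |\Tt|\le k$, and once we know $\Ss$ is an independent set, the revenue $r$ equals the sum of the unary contributions $\omega(R_A)$, which in turn equals $\omega(\Ss)$. The only nontrivial point is to verify that $\Ss$ is indeed independent. For distinct $A,B\in \Tt$ with $\mathbf{u}(A),\mathbf{u}(B)\ne \bot$, if $A$ and $B$ contain an adjacent pair of grid points, then the hard constraint $c_{A,B}$ is in $I_\Tt$ and, being satisfied, forces $R_A\cap R_B=\emptyset$. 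The question is what happens when no such adjacent pair exists.

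The main obstacle is thus a purely geometric claim: \emph{if $R_A\cap R_B\ne\emptyset$, then there exist $p\in A$ and $q\in B$ that are adjacent.} I would prove this by reasoning coordinate-wise. Enumerate all vertical grid lines of $G$ as $\ell_1<\cdots<\ell_L$. The vertical lines intersecting $R_A$ form a consecutive block $\ell_i,\ldots,\ell_j$, and those intersecting $R_B$ form another consecutive block $\ell_{i'},\ldots,\ell_{j'}$. Assuming without loss of generality $i\le i'$, the fact that $R_A$ and $R_B$ overlap in the $x$-direction implies $\ell_{i'-1}<x_1^B\le x_2^A<\ell_{j+1}$ (interpreting missing endpoints as $\pm\infty$), which forces $i'-1\le j$; hence either $V_A\cap V_B\neq\emptyset$ or $\ell_{j+1}=\ell_{i'}$, i.e.\ the two blocks either share a line or are immediately consecutive. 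Running the same argument on horizontal grid lines and combining, we pick $p\in A$ and $q\in B$ realizing both near-coincidences simultaneously; such $p$ and $q$ are adjacent by definition. This contradicts the assumption that no adjacent pair exists, so $R_A\cap R_B=\emptyset$ in all cases, establishing independence and completing the proof.
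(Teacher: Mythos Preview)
Your proof is correct and follows the same overall strategy as the paper: translate assignments to independent sets and back, with the only nontrivial point being the geometric lemma that two rectangles $R_A,R_B$ with nonempty types that intersect must have adjacent grid points in $A$ and $B$. Your coordinate-wise block argument for this lemma is fine; the paper gives a slightly slicker version by taking any point $x\in R_A\cap R_B$, letting $Q$ be the grid cell containing $x$, and observing that since each of $R_A,R_B$ contains both $x$ and at least one grid point, each must contain a corner of $Q$---and any two corners of the same cell are adjacent by definition. Both arguments are elementary and yield the same conclusion; the cell-based one just avoids the case analysis on overlapping versus consecutive blocks.
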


\begin{proof}
 For the first implication, we construct an assignment $\mathbf{u}\colon \Tt\to \Dd$ by setting, for each $A\in \Tt$, $\mathbf{u}(A)$ to be the unique rectangle $R\in \Ss$ for which $T_G(R)=A$. To see that the revenue of $\mathbf{u}$ is equal to $\omega(\Ss)$, note that for every $A\in \Tt$ the unary constraint $c_A$ yields revenue $\omega(\mathbf{u}(A))$, while all binary constraints yield revenue $0$, because the rectangles are pairwise disjoint.

For the second implication, let $\Ss\subseteq \Dd$ be the image of the
assignment $\mathbf{u}$ (possibly with $\bot$ removed). Clearly, $|\Ss|\leq
|\Tt|\leq k$. Since $\mathbf{u}$ yields a nonnegative revenue, all binary
constraints must give revenue~$0$, hence $\omega(\Ss)$ is equal to the revenue
of $\mathbf{u}$, that is, to $r$. It remains to argue that $\Ss$ is an
independent set. For this, take any distinct $A,B\in \Tt$; we need to argue that
in case when rectangles $R_A\coloneqq \mathbf{u}(A)$ and $R_B\coloneqq \mathbf{u}(B)$ are both not equal to $\bot$, they are disjoint. Suppose, for  contradiction, that $R_A$ and $R_B$ have some common point~$x$. Let $Q$ be the cell of the grid $G$ that contains $x$. Since $x\in R_A$ and $A$ is nonempty (recall that this is the assumption about all the sets in $\Tt$, following from $G$ being good for $\opt_k(\Dd)$), $A$ must contain at least one corner of~$Q$, say $p$. Similarly, $B$ contains a corner of~$Q$, say $q$. Note that $p$ and $q$ are adjacent grid points, hence in $I_\Tt$ there is a constraint $c_{A,B}$ that yields revenue $-\infty$ in the case when the rectangles assigned to $A$ and $B$ intersect. As this is the case in~$\mathbf{u}$, we obtain a contradiction with the assumption $r\geq 0$.
\end{proof} 

\subsection{Almost planarity of the Gaifman graph}

Let $H$ be the Gaifman graph of $I_\Tt$; see Figure~\ref{fig:conflict-graph} for an
example. Recall that the vertex set of $H$ is $\Tt$, and distinct $A,B\in \Tt$ are considered adjacent in $H$ iff there is a grid cell $Q$ of $G$ such that both $A$ and $B$ contain a corner of $Q$. Without loss of generality we assume that $H$ is connected, as
otherwise we solve a $I_\Tt$ by treating each connected component
separately and joining the solutions.

\begin{figure}[ht!]
    \centering
    \includegraphics[width=0.4\linewidth]{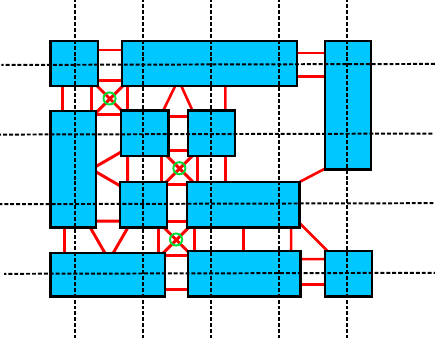}
    \caption{The Gaifman graph $H$ of the constructed {\sc{2-VCSP}}
    instance $I_\Tt$. The vertices are depicted in blue and the edges are
    depicted in thick red. The graph $\Hb$ is constructed from $H$ by introducing
    a new vertex at the intersection of every crossing (hence in the Figure we
    need to add green stroked vertices).}
    \label{fig:conflict-graph}
\end{figure}

%
%

Note that the graph $H$ is not necessarily planar, as there might be crossings within cells; this happens when all four corners belong to different elements of $\Tt$.
However, the intuition is that the crossings within cells are the only problem, hence $H$ is almost planar.  We would
like to apply Baker's technique on $H$. We do it in an essentially direct way, except that we need to be careful about the aforementioned crossings. For this, the following construction will be useful.

 Call a grid cell $Q$ a {\em{cross}} if $Q$ has four corners and all those four corners belong to pairwise different elements of $\Tt$. Note that then all those four elements form a clique in $H$. 
 Construct a graph $\Hb$ from $H$ as follows: add every cross $Q$ to the vertex set, make it adjacent to all four elements of $\Tt$ containing the corners of $Q$, remove the edge connecting the elements of $\Tt$ containing the top-left and the bottom-right corner of $Q$, and to the same for the top-right and bottom-left corners. 
 
 The reader may imagine $\Hb$ as obtained from $H$ by introducing a new vertex at the intersection of diagonals within every cross $Q$; this new vertex is identified with $Q$.
 See Figure~\ref{fig:conflict-graph}. So we have the following simple observation.
 
 \begin{claim}
  The graph $\Hb$ is planar.
 \end{claim}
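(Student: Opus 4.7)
The plan is to exhibit an explicit planar embedding of $\Hb$ that is guided by the geometry of the grid $G$ and the independent set $\Ss$. For each $A \in \Tt$, I would fix a rectangle $R_A \in \Ss$ with $T_G(R_A) = A$; since $\Ss$ is an independent set, these rectangles are pairwise disjoint. I would represent the vertex $A$ of $\Hb$ by a narrow connected region $U_A \subseteq R_A$, namely a thin thickening of a spanning tree of the grid points of $A$ drawn along grid lines of $G$ inside $R_A$. Such a tree exists because the grid points in $A$ form a rectangular sub-grid of $G$ contained in $R_A$. Consequently the regions $\{U_A\}_{A\in\Tt}$ are pairwise disjoint. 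For each cross vertex $Q$ of $\Hb$, I would place it at the geometric center of the cell $Q$.

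Next, every edge of $\Hb$ is drawn inside the closure of a single grid cell. For a non-cross cell $Q$, the distinct elements of $\Tt$ containing corners of $Q$ induce in $\Hb$ a clique of size at most $3$, which embeds planarly in the interior of $Q$ with each edge reaching the appropriate corner. For a cross cell $Q$, I would draw four radial edges from the cross vertex at the center of $Q$ to the four corners, together with the four ``side'' edges of the subdivided $K_4$ routed close to the four sides of $Q$; this is clearly planar inside $Q$.

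It remains to verify that these local drawings combine into a globally planar embedding. Every edge sits inside a single cell and meets the regions $\{U_A\}$ only at the corresponding corners of that cell, so edges drawn for different cells cannot cross one another in cell interiors. The regions $U_A$ themselves are pairwise disjoint because the rectangles $R_A$ are. The only genuine worry is whether some $U_A$ might cross through the interior of a cell $Q$ whose corners are not all in $A$; this is the main obstacle. It is overcome using the rectangular sub-grid structure of $A$: if $U_A$ runs along a side of $Q$, both endpoints of that side are automatically in $A$, while if $A$ contained two diagonally opposite corners of $Q$ then $A$ would contain all four corners of $Q$, making $Q$ contribute no edges to $\Hb$ at all. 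Hence the spanning-tree regions $U_A$ can be drawn so as to approach each cell only along sides whose endpoints lie in $A$, steering clear of the cell interiors where the edges of $\Hb$ are routed, and the global drawing is planar.
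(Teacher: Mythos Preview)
Your approach is correct and yields a valid planar drawing of $\Hb$, but it is genuinely different from the paper's argument. The paper does not build an explicit embedding at all; instead it observes that $\Hb$ is a \emph{minor} of an auxiliary planar graph $H^\bullet_0$, obtained from the grid graph on $\pnts(G)$ by adding one face-vertex per grid cell, adjacent to that cell's corners. One then recovers $\Hb$ from $H^\bullet_0$ by contracting each $A\in\Tt$ (connected because $A$ is a rectangular block of grid points), deleting the remaining grid points, and contracting or deleting the face-vertices of non-cross cells. Planarity follows in one line from the closure of planarity under taking minors.

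Your route instead leans on the geometry of the optimum solution $\Ss$: you use the pairwise disjoint rectangles $R_A$ to host the vertex regions $U_A$, and route every edge inside a single grid cell. This is more pictorial and makes the ``almost planar'' intuition very concrete, at the price of some bookkeeping you leave implicit (assigning each edge of $\Hb$ to exactly one witnessing cell so that it is not drawn twice, and taking the thickening small compared to both the grid spacing and the minimum pairwise distance between the closed disjoint rectangles $R_A$). The paper's minor argument is slicker and, notably, does not invoke $\Ss$ at all---it only uses that every $A\in\Tt$ is a nonempty rectangular block of grid points and that these blocks are pairwise disjoint. Conceptually the two proofs rest on the same observation you isolated: the only obstructions to planarity are the two diagonals inside a cross cell, and introducing a new vertex there removes the crossing.
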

 \begin{proof}
  Let $H^\bullet_0$ be the graph consisting of the grid points of $G$ where two grid points are adjacent if they are consecutive on the same line of $G$, plus we add a new vertex for every cell of $G$ and make it adjacent to all the corners of this cell. Clearly, $H^\bullet_0$ is planar. Now, $\Hb$ can be obtained from $H^\bullet_0$~as follows:
  \begin{itemize}[nosep]
   \item contract every $A\in \Tt$ to a single vertex;
   \item remove every element of $\pnts(G)\setminus \bigcup \Tt$; and
   \item for every grid cell $Q$ of $G$ that is not a cross, either contract the vertex corresponding to $Q$ onto any of its neighbors, or remove it if it has no neighbors. 
  \end{itemize}
  So $\Hb$ is a minor of a planar graph, hence it is planar as well.
 \end{proof}

 We also have the following simple claim.
 
 \begin{claim}
    \label{fact:dist}
  For all $A,B\in \Tt$, $\dist_\Hb(A,B)\leq 2\cdot \dist_H(A,B)$.
 \end{claim}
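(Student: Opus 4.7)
The plan is to lift any shortest $A$--$B$ path in $H$ to a walk in $\Hb$ that is at most twice as long, by replacing each removed edge with a detour of length two through the corresponding cross vertex.

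Concretely, let $n \coloneqq \dist_H(A,B)$ and fix a shortest path $A = A_0, A_1, \ldots, A_n = B$ in $H$. I will argue that each edge $A_iA_{i+1}$ can be replaced in $\Hb$ by a path of length at most $2$. By the definition of $H$, the edge $A_iA_{i+1}$ witnesses the existence of a grid cell $Q_i$ with a corner $p_i \in A_i$ and a corner $q_i \in A_{i+1}$ whose grid points are adjacent. If the edge $A_iA_{i+1}$ is not deleted during the construction of $\Hb$, then it is directly available and contributes a single step. Otherwise, by the construction of $\Hb$, the deletion can only be caused by some cross $Q$ whose top-left and bottom-right (or top-right and bottom-left) corners are contained in $A_i$ and $A_{i+1}$ respectively; in particular, $Q$ itself is a vertex of $\Hb$ adjacent to all four elements of $\Tt$ containing its corners, and in particular to both $A_i$ and $A_{i+1}$. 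Hence $A_i - Q - A_{i+1}$ is a path of length $2$ in $\Hb$.

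Concatenating these at most $n$ segments (each of length $1$ or $2$) yields a walk in $\Hb$ from $A$ to $B$ of total length at most $2n$. Since the distance is bounded from above by the length of any walk, we obtain $\dist_\Hb(A,B) \le 2n = 2 \cdot \dist_H(A,B)$, as required.

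There is no genuine obstacle here: the only point that requires a bit of care is verifying that whenever an edge $A_iA_{i+1}$ is removed by the construction, a replacement cross vertex is necessarily available and is connected to both endpoints in $\Hb$, which is immediate from the way $\Hb$ is built. Everything else is a direct walk-concatenation argument.
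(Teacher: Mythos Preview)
Your proof is correct and takes essentially the same approach as the paper: reduce to the case of a single edge of $H$ via a shortest path, and observe that any edge removed in the construction of $\Hb$ is replaced by a length-$2$ detour through the corresponding cross vertex. The paper phrases this as ``triangle inequality along a shortest path'' rather than explicit walk concatenation, but the argument is the same.
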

 \begin{proof}
  By repeated use of triangle inequality along a shortest path connecting $A$ and $B$, it suffices to argue the following: if $A$ and $B$ are adjacent in $H$, then they are at distance at most $2$ in $\Hb$. For this, observe that either $A$ and $B$ are still adjacent in $\Hb$, or they contain two opposite corners of some cross $Q$, which becomes their common neighbor in $\Hb$. 
 \end{proof}

\newcommand{\Ll}{{\mathcal{L}}}

We now apply Baker's technique.
Select any $A\in \Tt$ and partition $\Tt$ into layers according to the distance in $H$ from $A$: for a nonnegative integer $t$, layer $\Ll_t$ consists of all those vertices $B\in \Tt$ for which $\dist_H(A,B)=t$. Note that layers $\Ll_t$ form a partition of $\Tt$ due to the assumption that $H$ is connected. The following observation is crucial.

 \newcommand{\Ww}{\mathcal{W}}
 \newcommand{\Kk}{\mathcal{K}}

\begin{lemma}\label{lem:tw-bound}
 For all integers $0\leq i<j$, the treewidth of $H[\Ll_i\cup \Ll_{i+1}\cup \ldots\cup \Ll_j]$ is bounded by $\Oh(j-i)$. 
\end{lemma}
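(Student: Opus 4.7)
The plan is to apply Baker's layering technique to the planar graph $\Hb$, and then to convert the resulting tree decomposition into one of $H$ by inflating the cross vertices.

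Define a labeling $\ell\colon V(\Hb) \to \N$ by setting $\ell(v) := \dist_H(A, v)$ for $v \in \Tt$ and $\ell(Q) := \min\{\dist_H(A, u) : u \text{ is a corner of } Q\}$ for a cross vertex $Q$; in particular, $\Ll_t = \Tt \cap \ell^{-1}(t)$. I first verify that $\ell$ is a valid layering of $\Hb$, meaning that every edge of $\Hb$ has endpoints whose $\ell$-values differ by at most one. For edges within $\Tt$, this is immediate since such edges are also edges of $H$ and by the triangle inequality for $\dist_H(A,\cdot)$. For an edge $Qv$ between a cross vertex $Q$ and a corner $v$, the inequality $\ell(Q) \le \ell(v)$ holds by definition, while for the reverse direction the corner $u$ realizing the minimum in $\ell(Q)$ is $H$-adjacent to $v$ (all four corners of a cross are pairwise $H$-adjacent), so $\ell(v) \le \ell(u) + 1 = \ell(Q) + 1$.

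Since $\Hb$ is planar, a standard consequence of Baker's technique yields, for any $0 \le a \le b$, a tree decomposition of width $\Oh(b-a+1)$ of the subgraph of $\Hb$ induced by $\ell^{-1}(\{a, a+1, \ldots, b\})$. I apply this with $a := \max(0, i-1)$ and $b := j$, obtaining a tree decomposition $(T, \bag^\bullet)$ of width $\Oh(j-i)$. I then convert $(T, \bag^\bullet)$ into a tree decomposition of $H[\Ll_i \cup \ldots \cup \Ll_j]$ by two operations: in every bag containing a cross vertex $Q$, replace $Q$ by the set of (at most four) corners of $Q$; then intersect every bag with $\Ll_i \cup \ldots \cup \Ll_j$. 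The width grows by at most an additive constant and remains $\Oh(j-i)$. The connectivity axiom is preserved because, for each corner $v$ of $Q$, the subtrees of bags originally containing $v$ and $Q$ already share at least one bag (witnessing the edge $Qv$ in $\Hb$), so their union is still a subtree. For edge coverage, every $H$-edge $uv$ with $u, v \in \Ll_i \cup \ldots \cup \Ll_j$ is either an $\Hb$-edge (directly inherited) or a diagonal of some cross $Q$, and in the latter case every bag that originally contained $Q$ now contains both $u$ and $v$.

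The main subtlety, and the reason for including the layer $\ell^{-1}(i-1)$ when applying Baker's technique, is that for a diagonal edge $uv$ of a cross $Q$ with $u, v \in \Ll_i$ it can happen that $\ell(Q) = i-1$ (when one of the other two corners of $Q$ lies at $H$-distance $i-1$ from $A$). Without this extra layer, $Q$ would be absent from the $\Hb$-subgraph and the edge $uv$ would remain uncovered; including it costs only $\Oh(1)$ in the width bound and allows the argument to go through.
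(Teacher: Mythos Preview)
Your overall plan matches the paper's: exploit the planarity of $\Hb$, obtain a bounded-width tree decomposition of a suitable induced subgraph of $\Hb$, and convert it to a decomposition of $H[\Ll_i\cup\ldots\cup\Ll_j]$ by inflating each cross vertex to its (at most four) corners. The conversion step, including your inclusion of the extra layer $i-1$ to catch crosses whose minimum-distance corner lies just below $i$, is essentially what the paper does.

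The gap is in the middle step. You verify that $\ell$ is a valid layering of $\Hb$ (adjacent vertices differ by at most one) and then invoke ``a standard consequence of Baker's technique'' to conclude that $\Hb[\ell^{-1}(\{a,\ldots,b\})]$ has treewidth $\Oh(b-a+1)$. But the standard Baker argument delivers this bound for the \emph{BFS layering} of a planar graph, and your $\ell$ is built from $H$-distances, not $\Hb$-distances, so it is not a BFS layering of $\Hb$. For an arbitrary valid layering of a planar graph the claimed bound is simply false: take an $n\times n$ grid with a single pendant vertex $r$ attached to one corner, set $\ell(r)=0$ and $\ell(u)=1$ for every grid vertex $u$; this is a valid layering of a planar graph with $\ell^{-1}(0)$ a singleton, yet $\ell^{-1}(\{1\})$ is the whole grid and has treewidth~$n$.

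What is missing is exactly the calculation the paper provides. One shows that $\ell^{-1}(\{0,\ldots,a-1\})$ induces a connected subgraph of $\Hb$, contracts it to a single vertex $A'$, and then bounds the radius of the resulting planar graph from $A'$ by $\Oh(b-a)$; only then does the Robertson--Seymour radius--treewidth bound apply. The radius bound is precisely where Claim~\ref{fact:dist} (that $\dist_{\Hb}(A,B)\le 2\,\dist_H(A,B)$) enters: a shortest $H$-path from $A$ to any $B\in\Ll_t$ lifts to an $\Hb$-path of length at most $2t$ that stays within $\ell^{-1}(\{0,\ldots,t\})$, so after contraction the distance from $A'$ to $B$ is $\Oh(t-a)\le\Oh(b-a)$. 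Your particular layering does enjoy this property, but you have not argued it, and without this argument the black-box appeal to Baker's technique does not go through.
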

\begin{proof}
 We shall assume that $i>0$; at the end we will quickly comment on how the case $i=0$ is resolved in essentially the same way.
 
 Let $\Ww\subseteq V(\Hb)$ be the union of all layers $\Ll_t$ with $i\leq t\leq j$, plus all crosses $Q$ that, in $\Hb$, are adjacent to any element of those layers.
 Further, let $\Kk\subseteq V(\Hb)$ be the union of all layers $\Ll_t$ with $0\leq t< i$, plus all crosses $Q$ that, in $\Hb$, are adjacent to any element of those layers, except those that are already included in $\Ww$. In this way, $\Kk$ and $\Ww$ are disjoint. Further, observe that from the definition of sets $\Ll_t$ as distance layers from $A$ it follows that both $\Hb[\Kk]$ and $\Hb[\Kk\cup \Ww]$ are connected.
 
 Let $H'$ be the graph obtained from $\Hb[\Kk\cup \Ww]$ by contracting the subgraph $\Hb[\Kk]$ into a single vertex; call it $A'$. As a minor of a planar graph, $H'$ is again planar. Next, by the definition of the layers, for every $B\in \Ll_i\cup \ldots \cup \Ll_j$ there exists $C\in \Ll_{i-1}$ such that $\dist_H(C,B)\leq j-i+1$. By Claim~\ref{fact:dist}, we have $\dist_\Hb(C,B)\leq 2(j-i+1)$, implying that $\dist_{H'}(A',B)\leq 2(j-i+1)$. Since every cross included in $\Ww$ is adjacent to some $B$ as above, we conclude that $H'$ is a connected planar graph of radius at most $2(j-i+1)+1=2(j-i)+3$. By standard bounds linking treewidth with radius in planar graphs (see for instance~\cite[2.7]{RobertsonS84}), we conclude that $H'$ has treewidth at most $6(j-i)+10$.
 
 It remains to connect the treewidth of $H'$ with the treewidth of $H''\coloneqq H[\Ll_i\cup \Ll_{i+1}\cup \ldots\cup \Ll_j]$. For this, let $(T,\bag)$ be a tree decomposition of $H'$ of width at most $6(j-i)+10$. We turn $(T,\bag)$ into a tree decomposition $(T,\bag')$ of $H''$ as follows. For every cross $Q\in \Ww$, let $N_Q$ be the set of (at most four) neighbors of $Q$ in $H'$. Then $(T,\bag')$ is obtained by first removing $A'$ from all bags, and then, for every cross $Q\in \Ww$, replacing $Q$ with $N_Q$ in all bags of $(T,\bag)$ that contain $Q$. Since every pair $B,B'\in \Ww$ that is adjacent in $H''$ but not in $H'$ has some cross $Q\in \Ww$ as a common neighbor, and elements of $N_Q$ are adjacent to $Q$ for every cross $Q\in \Ww$, it is straightforward to verify that $(T,\bag')$ is a tree decomposition of $H''$. Finally, in the transformation described above the cardinalities of bags grow by a multiplicative factor of at most $4$, hence the width of $(T,\bag')$ is at most $24(j-i)+43\in \Oh(j-i)$.
 
 This finishes the proof for the case $i>0$. If $i=0$, we perform the same reasoning except that we do not contract $\Kk$ (which now is empty). Thus, we simply work with $H'=\Hb[\Ww]$, and this graph has radius at most $2(j-i)=2j$ by Claim~\ref{fact:dist}. The rest of the reasoning applies verbatim.
\end{proof}

\subsection{Proof of Theorem~\ref{thm:main-rect}}

We are now ready to finish the proof of Theorem~\ref{thm:main-rect}. Recall that the steps performed so far were as follows:
\begin{itemize}[nosep]
 \item We guessed a rectangle $R_{\max}\in \Ss$ (optimum solution) and removed all rectangles of weight larger than $\omega(R_{\max})$ or not exceeding $\eps\cdot \omega(R_{\max})/k$. This induced a loss of at most $\eps\cdot \opt_k(\Dd)$ on the  optimum.
 \item We applied the algorithm of Lemma~\ref{lem:step1}. This way, we either find an independent set with a suitably large weight, or we construct a grid $G$ of size $|G|\leq 2k^2/\eps$.
 \item We used Lemma~\ref{lem:step2} to guess, by branching into $2^{\Oh(k\log (k/\eps))}$ possibilities, the combinatorial type $\Tt$ of an optimum solution.
 \item We constructed a {\sc{2-VCSP}} instance $I_\Tt$ corresponding to the type $\Tt$.
\end{itemize}
By Claim~\ref{obs:equiv}, it remains to find a solution to $I_\Tt$ that yields revenue at least $(1-\eps)\opt(I_\Tt)$, where $\opt(I_\Tt)$ is the maximum possible revenue in $I_\Tt$. (Note that by retracing previous steps, this will result in finding a solution to the original instance of {\sc{MWISR}} of weight at least $(1-2\eps)\opt_k(\Dd)$, so at the end we need to apply the reasoning to $\eps$ scaled by a factor of $1/2$.)

As argued before, we may assume that $H$, the Gaifman graph of $I_\Tt$, is connected. We partition $\Tt$ into layers $\{\Ll_t\colon t=0,1,2,\ldots\}$ as in the previous section. Let $\ell\coloneqq\lceil 1/\eps\rceil$, and define
\begin{displaymath}
   \Mm_r\coloneqq \bigcup_{t\equiv r\bmod \ell} \Ll_t\qquad\textrm{for all }r\in \{0,1,\ldots,\ell-1\}. 
\end{displaymath}
Note that $\{\Mm_r\colon r\in \{0,1,\ldots,\ell-1\}\}$ is a partition of $\Tt$. 

Let $\mathbf{u}$ be an optimum solution to $I_\Tt$. Since it is always possible to assign $\bot$ to every element of $\Tt$, we have $f(\mathbf{u})\geq 0$, in particular all (hard) binary constraints are satisfied under $f$. Therefore, $f(\mathbf{u})=\sum_{r=0}^{\ell-1} f(\mathbf{u}|_{\Mm_r})$. Since $\ell\geq 1/\eps$, there exists $r_0\in \{0,1,\ldots,\ell-1\}$ such that $f(\mathbf{u}|_{\Mm_{r_0}})\leq \eps\cdot f(\mathbf{u})$. The algorithm guesses, by branching into $\ell$ possibilities, the value of $r_0$.

Let $I'_\Tt$ be the {\sc{2-VCSP}} instance obtained from $I_\Tt$ by deleting all variables contained in $\Mm_{r_0}$. Observe that we have $\opt(I'_{\Tt})\geq (1-\eps)\cdot \opt(I_\Tt)$, since $\mathbf{u}$ restricted to the variables of $I'_\Tt$ yields revenue at least $(1-\eps)\cdot \opt(I_\Tt)$. Further, every solution to $I'_{\Tt}$ can be lifted to a solution to $I_\Tt$ of the same revenue by just mapping all variables of $\Mm_{r_0}$ to $\bot$. Hence, it suffices find an optimum solution to $I'_\Tt$.

For this, observe that the Gaifman graph of $I'_\Tt$ is equal to $H-\Mm_{r_0}$. This graph is the disjoint union of several subgraphs, each contained in the union of at most $\ell-1$ consecutive layers $\Ll_t$. By Lemma~\ref{lem:tw-bound} we infer that the treewidth of $H-\Mm_{r_0}$ is bounded by $\Oh(\ell)=\Oh(1/\eps)$. We apply Lemma~\ref{thm:vcsp} to solve $I'_\Tt$ optimally in time $|\Dd|^{\Oh(1/\eps)}\cdot k^{\Oh(1)}$. Together with the previous guessing steps, this gives time complexity $2^{\Oh(k\log (k/\eps))}\cdot |\Dd|^{\Oh(1/\eps)}$ in total, as wanted. This concludes the proof of Theorem~\ref{thm:main-rect}.

\section{Axis-parallel segments}\label{sec:segm}

In this section we prove Theorem~\ref{thm:main-rect}. 
We use the same notation as in Section~\ref{sec:rect}: $\Dd$ is the given set of axis-parallel segments, $\omega\colon \Dd\to \real$ is the weight function on $\Dd$, and $\opt_k(\Dd,\omega)$ is the maximum possible $\omega$-weight of a set of at most $k$ disjoint segments in $\Dd$; we may drop $\omega$ in the notation if the weight function is clear from the context. Also, whenever $\Dd$, $\omega$, and $k$ are clear from the context, then by an {\em{optimum solution}} we mean a set of pairwise disjoint segments $\Ss\subseteq \Dd$ such that $|\Ss|\leq k$ and $\omega(\Ss)=\opt_k(\Dd)$.

\subsection{Reducing the number of distinct weights}\label{sec:weight-reduction}

We first apply the same preprocessing as in Section~\ref{sec:rect}: we guess a segment $R_{\max}\in \Ss$ of maximum weight and remove from $\Dd$ all segments of weight larger than $\omega(R_{\max})$ or not exceeding $\eps\cdot \omega(R_{\max})/k$. Let $\Dd'\subseteq \Dd$ be the obtained set of segments. As argued in Section~\ref{sec:rect}, we have 
\begin{displaymath}
   \opt_k(\Dd',\omega)\geq (1-\eps)\cdot \opt_k(\Dd,\omega). 
\end{displaymath}
As the next preprocessing step, we round all weights down to the set 
\begin{displaymath}
   M\coloneqq \{\omega(R_{\max})(1-\eps)^i\colon i\in \{0,1,\dots,\lceil\log_{1-\eps}(\eps/k)\rceil\}\}. 
\end{displaymath}
That is, we define the new weight function $\omega'\colon \Dd'\to \real$ by setting $\omega'(R)$ to be the largest element of $M$ not exceeding $\omega(R)$. Since the weight of every segment is scaled down by a multiplicative factor of at most $1-\eps$, we have
\begin{displaymath}
    \opt_k(\Dd',\omega')\geq (1-\eps)\cdot \opt_k(\Dd',\omega)\geq (1-\eps)^2\cdot \opt_k(\Dd,\omega)\geq (1-2\eps)\cdot \opt_k(\Dd,\omega).   
\end{displaymath}
Thus, the two preprocessing steps above reduce solving the instance $(\Dd,\omega)$ to solving the instance $(\Dd',\omega')$, at the cost of losing $2\eps\cdot \opt_k(\Dd)$ on the optimum and a $|\Dd|^{\Oh(1)}$ overhead in the time complexity. Observe that in $(\Dd',\omega')$, the number of distinct weights of rectangles is bounded by $|M|\leq \Oh(\eps\log (k/\eps))$. We show in the sequel, that the {\sc{MWISR}} problem for axis-parallel segments can be solved in fixed-parameter time when parameterized by both $k$ and the number of distinct weights.

\begin{theorem}
    \label{thm:segments}
    Suppose $\Dd$ is a set of axis-parallel segments in the plane and $\omega$ is a positive weight function on~$\Dd$. Let $W$ be the number of distinct weights assigned by $\omega$.
Then given~$k$, in time $(kW)^{\Oh(k^2)}\cdot |\Dd|^{\Oh(1)}$ one can find an optimum solution.
\end{theorem}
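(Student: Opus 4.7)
The plan is to adapt the grid/combinatorial-type framework from Section~\ref{sec:rect}, but to overcome the new difficulty that for segments, a grid that hits every object need not place a grid point \emph{inside} every object. I will iteratively refine the grid until every segment of an optimum solution $\Ss$ becomes \emph{nice} (i.e., contains a grid point), and then solve the resulting constrained subproblem directly.

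First, I would construct, by a greedy scheme analogous to Lemma~\ref{lem:step1}, an initial grid $G_0$ of size $\Oh(k^2)$ such that every segment of $\Dd$ is crossed by at least one line of $G_0$: repeatedly pick a yet-uncovered segment, add the line on which it lies (or one through its midpoint), and delete all segments that this new line crosses; either we end with $\Oh(k^2)$ lines, or we have extracted an independent set of weight at least $\opt_k(\Dd)$. I would then extend the combinatorial type to also describe non-nice segments: the type of a segment $I$ records its four ``framing'' grid lines (the closest grid lines above/below/left/right of $I$) together with its weight $\omega(I)$. Because $|G|$ stays $\Oh(k^2)$ throughout and there are only $W$ distinct weights, the number of possible types for one segment is $(kW)^{\Oh(1)}$, so a joint type for all $k$ segments of $\Ss$ can be guessed in $(kW)^{\Oh(k)}$ time.

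The heart of the algorithm is the following refinement loop. Given the current grid $G$, guess the joint combinatorial type $\Tt$ of the segments of $\Ss$ with respect to $G$. If $\Tt$ declares every segment nice, we proceed to the final step. Otherwise, pick the heaviest $\tau \in \Tt$ labelled non-nice, enumerate the (at most $|\Dd|$) segments matching $\tau$, and for each candidate add at most one new grid line in its ``missing'' direction so that the candidate becomes nice. Thus at most $k$ new lines per round suffice to force some correct candidate for $\tau$ to be nice; after $k$ rounds (one per segment of $\Ss$) the updated grid still has $\Oh(k^2)$ lines and is nice for every segment of $\Ss$. Re-guessing the joint type at each round yields total branching $(kW)^{\Oh(k^2)}$, matching the claimed bound.

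Once every segment of $\Ss$ is nice under the final grid $G$ and the joint type is fixed, the residual problem has a very restricted structure: two candidate segments of $\Dd$ can overlap only if they are collinear, so the natural $\textsc{2-VCSP}$ encoding (one variable per type, domain = matching segments, unary revenues $\omega$, hard binary ``no overlap'' constraints exactly as in Section~\ref{sec:rect}) has a Gaifman graph that is a disjoint union of paths — one path per grid line along which several types lie. Hence Theorem~\ref{thm:vcsp} (or a direct left-to-right DP along each grid line) solves it optimally in $|\Dd|^{\Oh(1)}$ time, giving the claimed overall complexity $(kW)^{\Oh(k^2)} |\Dd|^{\Oh(1)}$.

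The main obstacle I expect is the correctness of the refinement step: I must argue that the $\leq k$ lines added per round can be chosen so that \emph{some} correct candidate realising $\tau$ is guaranteed to become nice, and simultaneously that the previously guessed types for the other segments of $\Ss$ remain consistent (or can be re-guessed within the promised budget) under the enlarged grid. Getting this bookkeeping right, together with verifying that the final Gaifman graph is genuinely a union of paths rather than something slightly more complex (e.g., because of interactions at the endpoints of collinear segments), is the technical crux of the proof.
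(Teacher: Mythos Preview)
Your overall architecture---build a small hitting grid, iteratively refine it until every segment of an optimum solution is nice, then solve a {\sc 2-CSP} whose Gaifman graph is a union of paths---matches the paper's. But two of the three steps have genuine gaps.

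\textbf{Initial grid.} Your greedy argument ``either we stop with $\Oh(k^2)$ lines, or we have extracted an independent set of weight at least $\opt_k(\Dd)$'' does not go through in the \emph{exact} setting of Theorem~\ref{thm:segments}: without a bound on the ratio between the lightest and heaviest weight, a large independent set of light segments need not outweigh $\opt_k$. The paper fixes this by first guessing the $\preceq$-minimum segment $R_{\min}$ of a lexicographically-maximal optimum solution and deleting all segments lighter than it; under that assumption one can prove (Claim~\ref{cl:k-lines}) that a grid of size at most $k$ hitting all of $\Dd$ exists, and then a sweep-based $\Oh(k)$-approximation (Lemma~\ref{lem:k2-lines}) finds one of size $\Oh(k^2)$ or certifies failure of the guess.

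\textbf{Refinement step.} This is where the real idea sits, and your proposal does not contain it. You say ``enumerate the (at most $|\Dd|$) segments matching $\tau$, and for each candidate add at most one new grid line\ldots\ Thus at most $k$ new lines per round suffice''---but there may be $|\Dd|$ candidates, so this is not yet an argument. The paper's mechanism (Lemma~\ref{lem:branching}) is: guess the types of the already-nice segments $\Nn\subseteq\Ss$ \emph{and} of the heaviest non-nice segment $R_{\max}$; greedily solve the path-{\sc 2-CSP} for $\Nn$ so as to leave \emph{maximal room} inside the box $B$ prescribed by the type of $R_{\max}$; let $\Rr$ be the set of candidates of that type compatible with the greedy solution $\Nn'$. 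Now comes a dichotomy: if the candidates in $\Rr$ lie on fewer than $k-|\Nn|$ lines, add all those lines and the true $R_{\max}$ becomes nice; if they lie on at least $k-|\Nn|$ lines, pick $k-|\Nn|$ pairwise disjoint candidates $\Ll\subseteq\Rr$ and observe that $\Nn'\cup\Ll$ is itself an optimum solution (each candidate has the same weight as $R_{\max}$, which dominates every segment of $\Ss\setminus\Nn$), and the new grid is nice for all of it. The greedy ``leave maximal room'' step is essential: without it you cannot guarantee $R_{\max}\in\Rr$, and without the dichotomy you cannot bound the number of added lines by $k$. Your final step (paths, Theorem~\ref{thm:vcsp}) is correct and matches the paper.
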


Note that Theorem~\ref{thm:main-segm} follows from combining Theorem~\ref{thm:segments} with the preprocessing described above (applied to $\eps$ scaled by a factor of $1/2$).
Therefore, from now on we focus on proving
Theorem~\ref{thm:segments}.


\subsection{Constructing a grid}\label{sec:grid}

Let $\preceq$ be any total order on $\Dd$ that orders the segments by non-decreasing weights, that is, $\omega(R)<\omega(R')$ entails $R\prec R'$. Extend $\preceq$ to subsets of $\Dd$ in a lexicographic manner: $\Ss\prec \Ss'$ if $\Ss$ is lexicographically smaller than $\Ss'$ where both sets are sorted according to $\preceq$. 
Let $\Ss_{\max}$ be the $\preceq$-maximum optimum solution.


The next step is to guess (by branching into $|\Dd|$ options) the $\preceq$-minimum segment $R_{\min}$ of $\Ss_{\max}$. Having done this, we safely remove from $\Dd$ all segments $R$ satisfying $R\prec R_{\min}$. Since $\Ss_{\max}$ is the $\preceq$-maximum optimum solution, we achieve the following property: every optimum solution contains the $\preceq$-smallest segment of $\Dd$. We proceed further with this assumption.

We now show that under this assumption, there must exist a grid of size at most $k$ that hits every segment from $\Dd$. Here and later on, a segment is {\em{hit}} by a line if they intersect, and a segment is {\em{hit}} by a grid if it is hit by a line in this grid.

\begin{claim}\label{cl:k-lines}
    Suppose every optimum solution contains the $\preceq$-minimum segment of $\Dd$. Then 
	there exists a grid $G$ of size at most $k$ such that every segment in $\Dd$ is hit by $G$.
\end{claim}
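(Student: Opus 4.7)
The plan is to take $G$ to be the collection of lines supporting the segments of an optimum solution. Concretely, fix any optimum solution $\Ss$; by the hypothesis we have $R_0 \in \Ss$, where $R_0$ denotes the $\preceq$-minimum segment of $\Dd$. For each $R' \in \Ss$, include in $G$ the unique horizontal or vertical line carrying $R'$. This gives $|G| \leq |\Ss| \leq k$, so the size bound is automatic. The work is in verifying that every $R \in \Dd$ is hit by $G$.

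I proceed by contradiction: suppose some $R \in \Dd$ is not hit by $G$. The first observation is geometric: $R$ must be disjoint from every segment of $\Ss$. Indeed, if $R$ shared a point $p$ with some $R' \in \Ss$, then because both segments are axis-parallel, the line carrying $R'$ (which lies in $G$) would pass through $p \in R$ and thus hit $R$, contradicting the choice of $R$. In particular, $R \neq R_0$, since $R_0$ itself is clearly hit by the line carrying it.

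Next I split on $|\Ss|$. If $|\Ss| < k$, then $\Ss \cup \{R\}$ is an independent set of size at most $k$ with strictly larger weight than $\Ss$ (as all weights are positive), contradicting optimality of $\Ss$. If $|\Ss| = k$, consider $\Ss' \coloneqq (\Ss \setminus \{R_0\}) \cup \{R\}$: independence of $\Ss'$ follows from $R$ being disjoint from $\Ss$, and $|\Ss'| \leq k$. Since $R_0$ is $\preceq$-minimum in $\Dd$ and $R \neq R_0$, we have $\omega(R_0) \leq \omega(R)$; strict inequality makes $\Ss'$ strictly heavier than $\Ss$, while equality makes $\Ss'$ an optimum solution not containing $R_0$, contradicting the hypothesis. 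Either case yields a contradiction, completing the proof.

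The one delicate point is the case $|\Ss| = k$ with $\omega(R_0) = \omega(R)$, where a strict weight improvement is unavailable and one instead needs the stronger conclusion that \emph{every} optimum—not merely some optimum—contains $R_0$. This is exactly what the preceding branching on $R_{\min}$ (combined with passing to the $\preceq$-maximum optimum $\Ss_{\max}$) was designed to guarantee.
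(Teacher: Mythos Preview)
Your proof is correct and follows essentially the same approach as the paper: take $G$ to be the lines carrying the segments of an optimum solution $\Ss$, and if some $R\in\Dd$ misses $G$ then swap $R_0$ for $R$ to obtain an optimum solution omitting $R_0$. The case split on $|\Ss|$ is unnecessary (the swap $\Ss\setminus\{R_0\}\cup\{R\}$ already handles both cases), but this is a harmless redundancy.
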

\begin{proof}
    Let $R_0$ be the $\preceq$-minimum segment of $\Dd$ and let $\Ss$ be any optimum solution.
    Let $G$ be the grid comprising of, for every segment $R\in \Ss$, the line containing $R$. Clearly, we have $|G|\leq |\Ss|\leq k$. Suppose for contradiction, that there is a segment $R\in \Dd$ which is not hit by any line of $G$. Clearly $R\neq R_0$, because $R_0\in \Ss$ by assumption.
    Consider $\Ss'\coloneqq \Ss-\{R_0\}\cup \{R\}$ and note that $\Ss'$ is an independent set, because all segments of $\Ss$ are contained in lines of $G$, while $R$ is disjoint with all those lines. Since $R_0\prec R$, we have $\omega(R_0)\leq \omega(R)$, hence $\omega(\Ss')\geq \omega(\Ss)$. So $\Ss'$ is an optimum solution that does not contain $R_0$, a~contradiction.
\end{proof}

Note that the proof of Claim~\ref{cl:k-lines} is non-constructive, as the definition of the grid depends on the (unknown) optimum solution $\Ss$. However, we can give a polynomial-time $\Oh(k)$-approximation algorithm for finding a grid that hits all segments in $\Dd$.

\begin{lemma}\label{lem:k2-lines}
    There exists a polynomial time algorithm that, given a set $\Dd$ of axis-parallel segments in the plane and an integer $k$, either correctly concludes that there is no grid of size at most $k$ which hits all segments of~$\Dd$, or finds such a grid of size $\Oh(k^2)$.
\end{lemma}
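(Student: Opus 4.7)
My plan is a polynomial-time algorithm built around the classical 1D interval stabbing subroutine. First, for every $y$-coordinate at which at least one horizontal segment of $\Dd$ lies, I would compute $m_y$, the minimum number of points on the $x$-axis that stab all the $x$-intervals of the horizontal segments at height $y$; this is the standard 1D stabbing problem, solvable greedily in polynomial time. The key observation is that if $m_y > k$, then every grid of size at most $k$ must contain the horizontal line at height $y$, since otherwise the segments at that height would have to be pierced by more than $k$ vertical lines. Hence $Y^* \coloneqq \{y : m_y > k\}$ is contained in the horizontal part of every valid grid; if $|Y^*| > k$ I output ``no''. Analogously I define and compute the set $X^*$ of forced vertical lines coming from the vertical segments, outputting ``no'' if $|X^*|>k$.

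After inserting $Y^* \cup X^*$ into the grid $G$ and discarding the segments they hit, let $H'$ denote the remaining horizontal segments. Every height of a segment in $H'$ satisfies $m_y \le k$ by construction. I then compute the minimum number $L_{H'}$ of vertical lines stabbing $H'$, again via a single 1D stabbing. The heart of the argument is the claim that whenever $\Dd$ admits a grid of size at most $k$, one has $|L_{H'}| \le k^2 + k$. I would prove this by taking any such grid $(G_h, G_v)$, noting that $Y^* \subseteq G_h$ and $X^* \subseteq G_v$, and constructing an explicit vertical stabbing of $H'$: for each of the at most $k$ heights $y' \in G_h \setminus Y^*$ take the $m_{y'} \le k$ vertical lines witnessing $m_{y'}$, and append the at most $k$ lines of $G_v \setminus X^*$. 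This gives at most $k \cdot k + k = k^2 + k$ vertical lines that jointly cover $H'$. Hence if the computed $L_{H'}$ exceeds $k^2 + k$, I safely output ``no''; otherwise $L_{H'}$ is added to $G$.

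A symmetric procedure computes a set $L_{V''}$ of at most $k^2 + k$ horizontal lines stabbing all still-uncovered vertical segments, with the same ``no'' escape when the bound is exceeded. The final grid $G = Y^* \cup X^* \cup L_{H'} \cup L_{V''}$ has size at most $2k + 2(k^2+k) = \Oh(k^2)$ and by construction hits every segment of $\Dd$; all subroutines are polynomial-time 1D stabbings, so the overall running time is polynomial. I expect the main technical obstacle to be establishing the bound $|L_{H'}| \le k^2 + k$ (and its vertical analogue), since everything else is either standard algorithmic plumbing or the elementary forced-line observation. The bound itself is a direct counting argument once an optimum grid is split into its forced and non-forced horizontal lines, as sketched above.
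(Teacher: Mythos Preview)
Your proof is correct and takes a genuinely different route from the paper's. The paper uses a single left-to-right sweep: it repeatedly finds the leftmost position $x_i$ at which the segments lying entirely to the left (and not yet covered) can no longer be hit by $k$ horizontal lines, covers those segments with $k+1$ lines ($k$ horizontal plus the vertical line at $x_i$), and iterates. The ``no'' certificate is that if more than $k+1$ iterations occur, then each of the pairwise disjoint vertical slabs $(x_{i-1},x_i]$ forces a distinct vertical line into any valid grid, contradicting $|G'|\le k$.

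Your approach instead first extracts the structurally \emph{forced} lines $Y^*,X^*$ (heights and abscissae at which more than $k$ orthogonal stabbers are needed), and then reduces the residual problem to two global 1D stabbings, with the key counting bound $|L_{H'}|\le k^2+k$ obtained by splitting a hypothetical optimal grid into its forced and non-forced lines. This is clean and arguably more explanatory of why the $\Oh(k^2)$ bound holds. On the other hand, the paper's sweep is more self-contained: it treats horizontal and vertical segments uniformly in one pass (both are simply projected to the $y$-axis in each iteration) and the ``no'' certificate falls out directly from the iteration count, without any separate forced-line analysis. Both give polynomial time and a grid of size $\Oh(k^2)$, so either suffices for the lemma.
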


\begin{proof}
    We construct a grid $G$ as follows.  Swipe a vertical line from left to
    right across $\Dd$ until the first moment when the segments lying entirely to the left
    of the line can not be hit by $k$ horizontal lines anymore. Let $x_1$ be the position of the line at this moment; in other words, $x_1$ is the least real such that the segments of $\Dd$ entirely contained in $(-\infty,x_1]\times \real$ cannot be hit with $k$ horizontal lines. We set $x_1=\infty$ in case the whole $\Dd$ can be covered with at most $k$ horizontal lines. By the minimality of $x_1$, the segments entirely contained in $(-\infty,x_1]\times \real$ can be covered by $k+1$ lines: the $k$ horizontal lines required to cover segments in $(-\infty,x_1)$, plus one vertical line at $x_1$ (in case $x_1\neq \infty$). We add all those $k+1$ lines to $G$, delete from $\Dd$ all segments hit by those lines, and repeat the procedure until no more segments are left in $\Dd$. This way we obtain numbers $x_1\leq x_2\leq \ldots \leq x_\ell$ and a grid $G$ of size at most $(k+1)\ell$, where $\ell$ is the number of iterations of the procedure. See Figure~\ref{fig:seg-grid} for an
    illustration.

    \begin{figure}[ht!]
        \centering
        \includegraphics[width=0.4\textwidth]{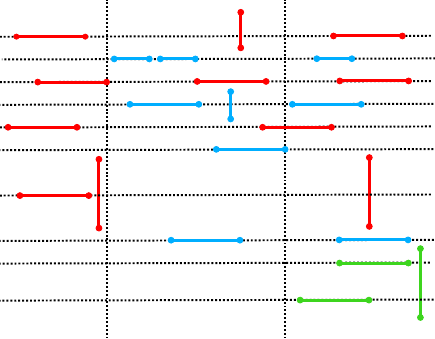}
        \caption{Example of the grid construction for $k=3$. Subsequently, red,
            blue and then green
        segments are removed in consecutive iterations. In each
        iteration we
    scan the segments from left to right until $k+1$ horizontal lines are needed to cover the already seen segments.
    In the last iteration at most $k$ horizontal lines are selected. Lines added to $G$ are dotted.}
        \label{fig:seg-grid}
    \end{figure}
    
    Clearly, $G$ hits all segments in $\Dd$. So if $\ell\leq k+1$, then $|G|\leq (k+1)^2=\Oh(k^2)$ and the algorithm can provide $G$ as a valid output. We now argue that if $\ell>k+1$, then the algorithm may safely conclude that there is no grid of size at most $k$ that hits all segments of $\Dd$. For contradiction, suppose there is such a grid $G'$. For $i\in \{1,\ldots,\ell\}$, let $\Dd_i$ be the set of all segments entirely contained in $(x_{i-1},x_i]\times \real$, where we set $x_0=-\infty$. It is easy to see that $\Dd_i$ is precisely the set of segments for which the algorithm in iteration $i$ decided that it cannot be hit by at most $k$ horizontal lines. Hence, for each $i\in \{1,\ldots,\ell\}$, $G'$ must contain at least one vertical line hitting at least one segment in $\Dd_i$. The $x$-coordinate of this vertical line must belong to the interval $(x_{i-1},x_{i}]$, so these vertical lines must be pairwise different. We conclude that $|G'|\geq \ell>k$, a contradiction.

	
    It remains to argue how to implement the algorithm so that it runs in polynomial time. Observe that for a set of segments $\Dd'\subseteq \Dd$, the minimum number of horizontal lines needed to hit all the segments of $\Dd'$ can be computed as follows: Projecting all the segments $\Dd'$ on the vertical axis, and find the minimum number of points that hit the obtained set of intervals (some of which are single points; these are projected horizontal segments). This, in turn, can be done in time $\Oh(|\Dd'|\log |\Dd'|)$ using a standard greedy strategy. It is now straightforward to use this sub-procedure to execute the construction of $G$ described above in polynomial~time.
\end{proof}

We now combine Claim~\ref{cl:k-lines} and Lemma~\ref{lem:k2-lines} as follows. Run the algorithm of Lemma~\ref{lem:k2-lines} on $\Dd$ with parameter $k$. If the algorithm concludes that there is no grid of size at most $k$ that hits all segments of $\Dd$, then by Claim~\ref{cl:k-lines} we can terminate the current branch, as clearly one of the previous guesses was incorrect. Otherwise, we obtain a grid $G$ of size $\Oh(k^2)$ that hits every segment of $\Dd$. With this grid we proceed to the next steps.

For brevity of presentation, by adding four lines to $G$ we may assume that all segments of $\Dd$ are contained in the interior of the rectangle delimited by the left-most and the right-most vertical line of $G$ and the top-most and the bottom-most horizontal line of $G$. We will also say that a grid with this property {\em{encloses}} $\Dd$.

\subsection{Constructing a nice grid}

We use the same notion of niceness as in Section~\ref{sec:rect}. That is, a grid $G$ is \emph{nice} with respect to a segment~$R$, if
$R$ contains at least one grid point of $G$; in other words, $R$ is intersected by both a horizontal and a vertical line in $G$. 
We will also say that $R$
\emph{respects} the grid $G$.
The {\em{ugliness}} of a grid $G$ with respect to some optimum solution $\Ss$ is the number of segments of $\Ss$ that do not respect $G$. Then the {\em{ugliness}} of $G$ is the minimum over all optimum solutions $\Ss$ of the ugliness of $G$ with respect to $\Ss$. This way, a grid is {\em{nice}} if its ugliness is~$0$, or equivalently, there exists an optimum solution $\Ss$ such that $G$ is nice with respect to all the segments in $\Ss$.

In further considerations, it will be convenient to again rely on a suitable
defined notion of a combinatorial type of a segment with respect to a grid.
Consider a grid $G$ that encloses $\Dd$.
For a segment $R\in \Dd$, the {\em{combinatorial type}} of $R$ with respect to~$G$ is the $6$-tuple consisting of:
\begin{itemize}[nosep]
     \item The boolean value indicating whether $R$ is horizontal or vertical.
     \item The weight $\omega(R)$.
     \item The right-most line $\ell_{\leftarrow}$ of $G$ such that $R$ entirely lies strictly to the right of $\ell_{\leftarrow}$.
     \item The left-most line $\ell_{\rightarrow}$ of $G$ such that $R$ entirely lies strictly to the left of $\ell_{\rightarrow}$. 
     \item The bottom-most line $\ell_{\uparrow}$ of $G$ such that $R$ entirely lies strictly below $\ell_{\uparrow}$.
     \item The top-most line $\ell_{\downarrow}$ of $G$ such that $R$ entirely lies strictly above $\ell_{\downarrow}$. 
    \end{itemize}
In other words, $(\ell_{\leftarrow},\ell_{\rightarrow},\ell_{\uparrow},\ell_{\downarrow})$ contain the sides of the inclusion-wise minimal rectangle $R'$ delimited by the lines from $G$ whose interior contains $R$. Note that the set of grid points of $G$ contained in $R$ is equal to the set of grid points contained in the interior of $R'$. Assuming $G$ is clear from the context, for a type $t$ we will denote this set of grid points by $P(t)$. Observe that the number of different combinatorial types with respect to~$G$ is bounded by $2W|G|^4$, where $W$ is the number of distinct weights assigned by $\omega$.

\begin{lemma}\label{lem:nice-suffices}
 Given a finite set $\Dd$ of axis-parallel segments in the plane,  a positive weight function $\omega$ on $\Dd$, a positive integer $k$, and a grid $G$ that encloses $\Dd$ with a guarantee that the ugliness of $G$ is~$0$. Then an optimum solution for $\Dd,\omega,k$ can be found in time $(W\cdot |G|)^{\Oh(k)}\cdot |\Dd|^{\Oh(1)}$.
\end{lemma}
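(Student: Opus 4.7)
The plan is to enumerate the multiset of combinatorial types realized by an optimum solution $\Ss$ for which $G$ is nice (such $\Ss$ exists because the ugliness of $G$ is $0$), and then to solve the resulting constrained assignment problem for each guess. Since the number of types with respect to $G$ is at most $2W|G|^4$ and $|\Ss|\leq k$, there are $(W|G|)^{\Oh(k)}$ such guesses. Moreover, for any guessed tuple $\vec{t}=(t_1,\ldots,t_{k'})$, the total weight of an independent set matching $\vec{t}$ is fixed by the types (weights are part of the type), so it suffices to decide, for each guess, whether there is an assignment of concrete segments $R_1,\ldots,R_{k'}\in\Dd$ with $R_i$ matching $t_i$ and $\{R_1,\ldots,R_{k'}\}$ independent, and then return the maximum-weight feasible guess.

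The key observation making feasibility checking tractable is that, for segments respecting $G$, the type determines which grid line the segment lies on: for horizontal $R$, it is the unique line of $G$ strictly between $\ell_\downarrow$ and $\ell_\uparrow$ of $t$, and symmetrically for vertical $R$. Consequently, for a horizontal slot with type $t_h$ on line $\ell_h$ and a vertical slot with type $t_v$ on line $\ell_v$, any two concrete segments realizing these slots intersect if and only if $(\ell_v,\ell_h)\in P(t_h)\cap P(t_v)$, a condition depending only on the types. Hence, all pairwise disjointness constraints between perpendicular slots can be verified in $\Oh(k^2)$ time directly from $\vec{t}$, and tuples failing any such check are discarded as infeasible.

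After this filtering, the remaining constraints are between slots on the same grid line, whose chosen segments must form pairwise disjoint intervals on that line; different grid lines no longer interact. The problem therefore decomposes into at most $2k$ independent subproblems, one per grid line carrying a slot, each involving $m\leq k$ slots with a polynomial-size candidate set per slot. To solve a single such subproblem in polynomial time, I would further guess the left-to-right order in which the chosen segments appear on the line; the total number of order guesses summed over all lines is at most $k!$, which fits in the $(W|G|)^{\Oh(k)}$ budget. Once the order is fixed, the constraint graph on the line is a path (each consecutive pair is constrained to non-overlap), and the associated 2-CSP can be solved in $|\Dd|^{\Oh(1)}$ time by standard dynamic programming, or alternatively by a direct greedy sweep. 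Multiplying the three levels of enumeration and per-instance work yields the claimed $(W\cdot|G|)^{\Oh(k)}\cdot|\Dd|^{\Oh(1)}$ bound. The main subtlety, and the step I expect to need care in the formal write-up, is to verify that niceness of $G$ with respect to $\Ss$ genuinely pins down the grid line of each segment of $\Ss$ through its type, so that the cross-direction disjointness test really becomes a condition on types alone.
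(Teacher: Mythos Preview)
Your approach is essentially the paper's: guess the combinatorial types of an optimum solution that respects $G$, observe that cross-direction intersections are determined by the types (the paper phrases this as requiring the sets $P(t)$ to be pairwise disjoint and nonempty), and reduce the remaining problem to a {\sc{2-CSP}} whose Gaifman graph is a disjoint union of paths along grid lines. The one redundant step in your plan is guessing the left-to-right order on each line: once you have verified that the $P(t)$'s are pairwise disjoint and nonempty, each $P(t)$ is an interval of grid points on a single line of $G$, so the order along each line is already determined by the types, and ``adjacent'' simply means adjacent intervals. This is exactly what the paper does, obtaining the path structure immediately without the extra $k!$ factor. A small caution if you keep the order guess: the consecutive constraint must also enforce the guessed order (e.g., ``$R_{\sigma(i)}$ ends before $R_{\sigma(i+1)}$ begins''), not merely non-overlap; otherwise a wrong order guess combined with only non-overlap constraints can produce false positives (three types with $P(t_1)=\{p_1\}$, $P(t_2)=\{p_2\}$, $P(t_3)=\{p_3\}$ in that order, guessed as $t_1,t_3,t_2$, admit assignments with $R_1\cap R_2\neq\emptyset$). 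Your closing worry is not an obstacle: if a horizontal segment respects $G$ it contains a grid point, hence lies on a horizontal line of $G$, and this line is exactly the unique horizontal line of $G$ strictly between $\ell_{\downarrow}$ and $\ell_{\uparrow}$ recorded in the type.
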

\begin{proof}
 Fix any optimum solution $\Ss$ such that $G$ is nice with respect to $\Ss$. We guess, by branching into all possibilities, the combinatorial types (with respect to $G$) of all the segments of $\Ss$. Since there are at most $2W|G|^4$ different combinatorial types, this results in $(W\cdot |G|)^{\Oh(k)}$ branches. Let the guessed set of combinatorial types be $\Tt$. Since $G$ is supposed to be nice with respect to $\Ss$, we may assume that the sets $\{P(t)\colon t\in \Tt\}$ are nonempty and pairwise disjoint; otherwise the branch can be discarded. 
 
 We construct an auxiliary {\sc{$2$-CSP}} instance $I$ that models the choice of
 segments in $\Ss$. The set of variables is $\Tt$. For every type $t\in \Tt$,
 the domain $\Dd_t$ consists of all segments from $\Dd$ whose combinatorial type is~$t$. The constraints are as follows: 
    \begin{itemize}[nosep]
     \item If $t,t'\in \Tt$ are distinct types of horizontal segments, and $P(t)$ and $P(t')$ are two adjacent intervals of grid points on the same horizontal line of $G$, then we put a constraint between $t$ and $t'$ that among $\Dd_t\times \Dd_{t'}$, allows only pairs of disjoint segments.
     \item Analogous constraints are put for distinct types $t,t'\in \Tt$ of vertical segments for which $P(t)$ and $P(t')$ are adjacent intervals on the same vertical line.
    \end{itemize}
    It is straightforward to verify that solutions to $I$ correspond in one-to-one fashion to those independent sets in $\Dd$ for which the set of combinatorial types is $\Tt$. Moreover, observe that the Gaifman graph of $I$ is a disjoint union of paths, where every path $t_1-\ldots-t_p$ corresponds to a sequence $P(t_1),\ldots,P(t_p)$ of intervals on the same grid line such that $P(t_i)$ is adjacent to $P(t_{i+1})$ for $i\in \{1,\ldots,p-1\}$. Therefore, it suffices to solve $I$ optimally, which can be done in time $|\Dd|^{\Oh(1)}$ using, for instance, Lemma~\ref{thm:vcsp}.
\end{proof}

Lemma~\ref{lem:nice-suffices} suggests that we should aim to construct a grid with zero ugliness. 
So far, the grid $G$ constructed in the previous section may have positive ugliness: some segments of $\Dd$ may be intersected by just one, and not two orthogonal lines, and there is no reason why an optimum solution should not contain any such segments. Our goal is to reduce the ugliness of the grid by further branching steps. The branching strategy is captured in the following lemma.


\begin{lemma}\label{lem:branching}
    Given a finite set $\Dd$ of axis-parallel segments in the
    plane,  a positive weight function $\omega$ on $\Dd$, a positive integer
    $k$, and a grid $G$ that hits all segments of $\Dd$ and encloses $\Dd$.
    Let $W$ be the number of different weights assigned by $\omega$. Then one
    can construct, in time $(|G|\cdot W)^{\Oh(k)}\cdot |\Dd|^{\Oh(1)}$, a family $\Gg$ of grids with the following properties:
    \begin{enumerate}[align=left, font=\normalfont, label=(\roman*),nosep]
     \item\label{p:size} $|\Gg|\leq (|G|\cdot W)^{\Oh(k)}$;
     \item\label{p:inc} for each $G'\in \Gg$, we have $G'\supseteq G$ and $|G'\setminus G|\leq k$; and
     \item\label{p:ugly} If the ugliness of $G$ is positive, then there is $G'\in \Gg$ whose ugliness is strictly smaller than that of $G$. 
    \end{enumerate}
\end{lemma}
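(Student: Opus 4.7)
I will build $\Gg$ by branching over the combinatorial type $t$ of a specific non-respecting segment in an (unknown) optimum solution $\Ss$ witnessing the positive ugliness of $G$. For every type $t$ with $P(t)=\emptyset$, a canonical set $L_t$ of at most $k$ lines is constructed and the grid $G \cup L_t$ is added to $\Gg$. Since the number of combinatorial types with respect to $G$ is $\Oh(W|G|^4)$, this primary branching already produces only $(W|G|)^{\Oh(1)}$ grids, comfortably within the required budget $(W|G|)^{\Oh(k)}$; the remaining slack is spent on a secondary branching over the combinatorial types of the other segments of $\Ss$, needed to close the swap argument in the harder case.

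Without loss of generality $t$ is horizontal (the vertical case is symmetric), and two sub-cases arise. In sub-case~A, the $y$-strip of $t$ contains a unique horizontal line $h$ of $G$: segments of type $t$ are intervals on $h$ inside the $x$-strip of $t$, and I need to add a vertical line crossing one such interval. Viewing $\Dd_t$ as a collection of intervals on $h$, compute greedily a minimum point-cover $H$. If $|H| \leq k$, take $L_t$ to be the $|H|$ vertical lines through the points of $H$; this hits every interval in $\Dd_t$, in particular the segment of $\Ss$ of type $t$, which then respects $G \cup L_t$. If $|H| > k$, greedily select $k$ pairwise disjoint segments $D_1, \dots, D_k$ from $\Dd_t$ and let $L_t$ consist of vertical lines through each. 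Sub-case~B, where the $y$-strip of $t$ is an open strip between two consecutive horizontal lines of $G$, is treated analogously by adding horizontal lines at the $y$-coordinates of segments in $\Dd_t$: take every distinct such $y$-coordinate if there are at most $k$ of them, otherwise any $k$.

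Correctness reduces to a swap argument: for the correct guess---where $t$ is the type of some non-respecting $R_0 \in \Ss$ and $|H| > k$---we want at least one $D_j$ to be disjoint from $\Ss \setminus \{R_0\}$, so that $\Ss' := (\Ss \setminus \{R_0\}) \cup \{D_j\}$ is an independent set of the same weight as $\Ss$ (because $\omega(D_j)=\omega(R_0)$) with $D_j$ respecting $G \cup L_t$. Each vertical segment in $\Ss \setminus \{R_0\}$ blocks at most one $D_i$ because the $D_i$'s have pairwise disjoint $x$-ranges, and horizontal segments away from $h$ block none; this is the easy part. The hard part, and the main obstacle, is that a horizontal segment of $\Ss$ lying on $h$ can overlap several $D_i$'s at once, and analogously in sub-case~B a single vertical segment of $\Ss$ may cross several candidate $y$-coordinates. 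To defuse this I would additionally enumerate the multiset of combinatorial types of all segments of $\Ss$ ($\leq k$ of them, contributing at most $(W|G|)^{\Oh(k)}$ extra branches, still within budget); this pins down the count $q-1 \leq k-1$ of ``dangerous'' segments of $\Ss$ in the relevant strip, and allows the selection of $D_1, \dots, D_k$ to be refined---for example, by forcing each $D_i$ to lie inside one of the $q$ gaps carved on $h$ by the horizontal segments of $\Ss$---so that at least one $D_j$ is guaranteed to escape all blockers.
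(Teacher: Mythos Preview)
Your high-level plan---branch on the combinatorial type of a non-respecting segment of an optimum solution and add at most $k$ lines per branch---matches the paper's. The gap is exactly where you flag it: the ``refinement'' in the overflow case is not actually carried out, and the hint you give does not work.

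Concretely, guessing the multiset of combinatorial types of all segments of $\Ss$ tells you \emph{how many} dangerous segments there are and which grid cells they traverse, but it does not tell you where the gaps on $h$ (sub-case~A) or the free $y$-levels in the strip (sub-case~B) actually lie. So ``forcing each $D_i$ to lie inside one of the $q$ gaps carved by the horizontal segments of $\Ss$'' is not an executable instruction: the gaps depend on the precise positions of the unknown segments, not on their types. Without this, a single segment of $\Ss\setminus\{R_0\}$ can still block all $k$ of your candidates (e.g.\ in sub-case~B, one vertical segment whose $x$-coordinate lies on a vertical grid line strictly between $\ell_\leftarrow$ and $\ell_\rightarrow$---hence in the $x$-range of every $D_j$---and whose $y$-range covers all chosen $y$-levels).

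The paper closes this gap with two ideas you are missing. First, it does not try to swap against the unknown $\Ss$; instead, having guessed the types of the respecting part $\Nn$, it \emph{computes} a concrete realization $\Nn'$ of those types by a greedy $2$-CSP solve that provably leaves at least as much room inside the relevant box as $\Nn$ does (so $R_{\max}$ is guaranteed to fit). All candidate segments are then required to be disjoint from $\Nn'$, which is known. Second, the paper fixes $R_{\max}$ to be the \emph{heaviest} non-respecting segment. In the ``many candidates'' case it does not swap one segment; it takes $k-|\Nn|$ pairwise disjoint candidates $\Ll$ (each of weight $\omega(R_{\max})\geq$ every weight in $\Ss\setminus\Nn$) and exhibits the full independent set $\Nn'\cup\Ll$, which is an optimum solution completely respected by the new grid---ugliness drops to $0$ in one shot. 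Your one-for-one swap avoids the ``heaviest'' trick but then needs disjointness from $\Ss\setminus\{R_0\}$, which you cannot certify; the paper's route needs only disjointness from $\Nn'$, which it controls.
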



\begin{proof}
    Fix an optimum solution $\Ss$ such that the ugliness of $G$ with respect to $\Ss$ is minimum possible. Assume that this ugliness is positive, since otherwise \ref{p:ugly} holds vacuously and any family $\Gg$ satisfying \ref{p:size} and~\ref{p:inc} is a valid output (this will be guaranteed by the algorithm).
    We construct $\Gg$ by a branching algorithm that, intuitively, guesses a bounded amount of information about $\Ss$ and augments $G$ according to the guess, so that the augmented grid is nice with respect to at least one more segment of $\Ss$. Thus, different members of $\Gg$ correspond to different guesses on the structure of $\Ss$.
    
    Let
    $\Nn$ be the set of all segments in $\Ss$ that respect $G$. As the ugliness of $G$ is positive, $\Ss\setminus \Nn$ is nonempty. Let $R_{\max}$ be the maximum weight segment of $\Ss\setminus \Nn$; in case there are several with the same maximum weight, pick any of them.
    
    The algorithm guesses, by branching into all possibilities, the combinatorial types of all segments in $\Nn\cup \{R_{\max}\}$; this results in at most $(1+2W|G|^4)^k\leq  (|G|\cdot W)^{\Oh(k)}$ branches. For every guess we shall construct one grid $G'\supseteq G$ included in $\Gg$. Therefore, we fix one guess and proceed to the description of $G'$. 
    
    By symmetry, we may assume that $R_{\max}$ is horizontal.
    Let $\Tt$ be the (already guessed) set of combinatorial types of segments from $\Nn$, and let $t_{\max}=(\mathsf{horizontal},w,\ell_{\leftarrow},\ell_{\rightarrow},\ell_{\uparrow},\ell_{\downarrow})$ be the (already guessed) combinatorial type of $R_{\max}$. Similar to the proof of Lemma~\ref{lem:nice-suffices}, we assume that the sets $\{P(t)\colon t\in \Tt\}$ are nonempty and pairwise disjoint, as otherwise the guess can be safely discarded as incorrect. Also, note that each $P(t)$ is an interval consisting of consecutive grid points on a single line of $G$.
    
    
    Let $B$ be the rectangle delimited by $(\ell_{\leftarrow},\ell_{\rightarrow},\ell_{\uparrow},\ell_{\downarrow})$.
    Since $G$ is not nice with respect to $R_{\max}$, the interior of $B$ does not contain any grid point of $G$. So there are two cases to consider:
    \begin{itemize}[nosep]
     \item[{\bf{Case 1:}}] $\ell_\uparrow$ and $\ell_\downarrow$ are consecutive horizontal lines of $G$. This is equivalent to $R_{\max}$ lying in the interior of the horizontal strip between $\ell_\uparrow$ and $\ell_\downarrow$. In particular $R_{\max}$ is not contained in any line of $G$. Note that since $R_{\max}$ is hit by  $G$ (which is true about every segment of $\Dd$), the two vertical lines $\ell_\leftarrow$ and $\ell_\rightarrow$ are non-consecutive in $G$. So $B$ is the union of two or more horizontally adjacent grid cells of~$G$.
     \item[{\bf{Case 2:}}] $\ell_\uparrow$ and $\ell_\downarrow$ are non-consecutive horizontal lines of $G$. Since $R_{\max}$ is horizontal, there must exist exactly one line of $G$ between  $\ell_\uparrow$ and $\ell_\downarrow$, say $\ell$, and $\ell$ must contain $R_{\max}$. Note that since $R_{\max}$ contains no grid point of $G$, $\ell_\leftarrow$ and $\ell_\rightarrow$ must be two consecutive vertical lines of $G$. So $B$ is the union of two vertically adjacent grid cells of $G$.
    \end{itemize}
    We consider these two cases separately. See Figure~\ref{fig:case} for an illustration.

%
%
%
    \begin{figure}
        \centering
        \begin{subfigure}{0.45\textwidth}
            \includegraphics[width=\textwidth]{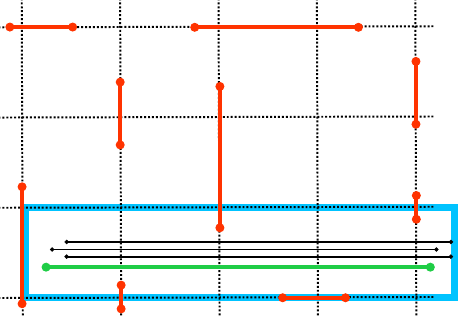}
        \end{subfigure}
        \hfill
        \begin{subfigure}{0.45\textwidth}
            \includegraphics[width=\textwidth]{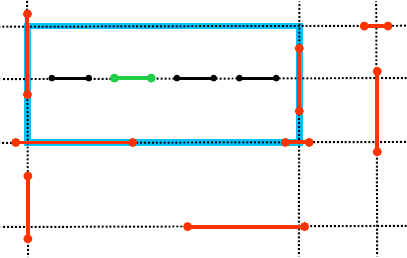}
        \end{subfigure}

        \caption{Illustration of Case 1 and Case 2 of the proof of
        Lemma~\ref{lem:branching}. Red segments are the segments from $\Nn$ (they already contain a grid point). The green segment is the candidate segment
$R_{\max}$ (maximum weight segment in the optimum solution not containing a grid
point). The box $B$ is the blue-stroked rectangle. We greedily find a
maximum-size region inside $B$ containing candidate segments, depicted in black, with the same
combinatorial type as $R_{\max}$. If there are
more than $k$ independent candidates, we can return an optimum solution
(since $R_{\max}$ has maximum weight). Otherwise we can add fewer than $k$ grid lines to the current grid (such that each candidate is hit by a newly added grid line).}
        \label{fig:case}
    \end{figure}
    
    \paragraph*{Case 1: \normalfont{$R_{\max}$ does not lie on a grid line.}} We construct an auxiliary {\sc{$2$-CSP}} instance $I$ that corresponds to the choice of segments in $\Nn$, exactly as in the proof of Lemma~\ref{lem:nice-suffices}. That is, the set of variables is $\Tt$, and the constraints are as described in the proof of Lemma~\ref{lem:nice-suffices}. Again, solutions to $I$ are in one-to-one correspondence to those independent sets in $\Dd$ whose set of combinatorial types is $\Tt$. Also, the Gaifman graph $H$ of $I$ is a disjoint union of paths, where each path corresponds to a sequence of adjacent intervals of grid points contained in a single grid line of $G$.
    
    The idea is to compute a solution $\mathbf{u}$ to $I$ that leaves ``the most space'' for the placement of $R_{\max}$. For this, for every connected component $C$ of $H$, we do the following. Recall that $C$ is a path, and enumerate the consecutive variables on $C$ as $t_1,\ldots,t_p$. Let $P(C)\coloneqq \bigcup_{i=1}^p P(t_i)$; then $P(C)$ is an interval of grid points on one line of $G$, say $\ell$. We again consider two cases.
    
    First, if $\ell$ is horizontal, or $P(C)$ does not contain any grid point lying in the interior of a side of $B$; compute any solution within $C$, say using the algorithm of Lemma~\ref{thm:vcsp}.
    
    Second, if $\ell$ is vertical and $P(C)$ contains some grid point lying in the interior of a side of $B$, we do as follows. Note that the intersection of $\ell$ with $B$ is a segment. Let $x_\uparrow\in \ell_\uparrow$ and $x_\downarrow\in \ell_\downarrow$ be the endpoints of this segment. Then $x_\uparrow$ and $x_\downarrow$ are two vertically adjacent grid points of $G$ that lie in the interior of the top and the bottom side of $B$, while $P(C)$ contains one or both of $x_\uparrow$ and $x_\downarrow$. For concreteness, assume for now that $P(C)$ contains both $x_\uparrow$ and $x_\downarrow$; the other cases are simpler and will be discussed later. Assume that there is no $i\in \{1,\ldots,p\}$ such that $P(t_i)$ contains both $x_\uparrow$ and $x_\downarrow$, because then the corresponding segment of $\Nn$ would necessarily intersect $R_{\max}$; so if this occurs, we can discard the branch as incorrect. So, up to reversing indexing if necessary, there exists $i\in \{1,\ldots,p-1\}$ such that $x_{\uparrow}\in P(t_i)$ and $x_\downarrow\in P(t_{i+1})$. We compute a solution within $C$ greedily as follows:
    \begin{itemize}[nosep]
    \item First, process variables $t_1,\ldots,t_i$ in this order. When considering $t_j$, assign the segment whose lower endpoint is the highest possible among the available segments of $\Dd_{t_j}$ (that is, disjoint with the segment assigned to $t_{j-1}$, for $j>1$).
    \item Second, apply a symmetric greedy procedure to variables $t_p,t_{p-1},\ldots,t_{i+1}$ in this order, always picking an available segment with the lowest possible higher endpoint.
    \end{itemize}
    In case any of $x_{\uparrow}$ or $x_{\downarrow}$ does not belong to $P(C)$, only one of the above greedy procedures is applied.
    
    If $I$ has a solution, the algorithm described above clearly succeeds in finding some solution $\mathbf{u}$ to $I$. Since we assume $I$ to have a solution --- witnessed by $\Nn$ --- we may terminate the branch as incorrect in case no solution to $I$ is found by the algorithm. Let $\Nn'=\mathbf{u}(\Tt)$ be the independent set of segments found by the algorithm above. It is straightforward to see that the greedy choice of solutions within the components of $H$ justifies the following claim.
    
    \begin{claim}\label{cl:greedy-outcome}
     It holds that $\mathrm{int} B\cap \bigcup \Nn' \subseteq \mathrm{int} B\cap \bigcup \Nn$, where $\mathrm{int} B$ denotes the interior of $B$. Consequently, $R_{\max}$ is disjoint with every segment in $\Nn'$.
    \end{claim}
    
    Now, let $\Rr\subseteq \Dd$ be the set of all segments in $\Dd$ whose combinatorial type is $t_{\max}$ and that are disjoint with all segments in $\Nn'$. By Claim~\ref{cl:greedy-outcome}, we necessarily have $R_{\max}\in \Rr$. Let $L$ be the set comprising of all (horizontal) lines containing some segment $R\in \Rr$. We consider two cases.
    \begin{itemize}[nosep]
     \item If $|L|<k-|\Nn|$, then we add the grid $G'\coloneqq G\cup L$ to $\Gg$. 
     \item If $|L|\geq k-|\Nn|$, then we add the grid $G'\coloneqq G\cup L'$ to $\Gg$, where $L'$ is any subset of $L$ of size $k-|\Nn|$. 
    \end{itemize}
    It remains to argue that in both cases, the ugliness of $G'$ is strictly smaller than that of $G$.
    
    In the case $|L|<k-|\Nn|$, it suffices to note that since $R_{\max}$ is contained in some line of $L$, the grid $G'=G\cup L$ is nice with respect $R_{\max}$, while $G$ is not nice with respect to $R_{\max}$ by assumption.
    
    Consider now the case $|L|\geq k-|\Nn|$. For every line $\ell\in L'$, pick any segment $R_\ell\in \Rr$ that lies on $\ell$. Let $\Ll\coloneqq \{R_{\ell}\colon \ell\in L'\}$. Note that the segments of $\Ll$ are pairwise disjoint due to lying on different horizontal lines, and they are also disjoint from all the segments of $\Nn'$ by the definition of $\Rr$. So $\Nn'\cup \Ll$ is an independent set of segments, and has size $k$. Furthermore, since the combinatorial type also features the weight of a segment, and $R_{\max}$ was chosen to be the heaviest segment within $\Ss\setminus \Nn$, we have $\omega(\Nn')=\omega(\Nn)$ and $\omega(\Ll)\geq \omega(\Ss\setminus \Nn)$. It follows that $\omega(\Nn'\cup \Ll)\geq \omega(\Ss)$, hence $\Nn'\cup \Ll$ is also an optimum solution. But $G'=G\cup L'$ is nice with respect to all the segments of $\Nn'\cup \Ll$, so the ugliness of $G'$ is $0$.

    \paragraph*{Case 2: {\normalfont $R$ lies on a grid line.}}
	This case works in a very similar fashion as the previous one, hence we only outline the differences here. 
	
	Recall that in this case $B$ consists of two vertically adjacent cells of $G$. Let $S$ be the common side of those cells; then our guess on the combinatorial type $t_{\max}$ of $R_{\max}$ says that $R_{\max}$ should be contained in the interior of $S$.
	
	We construct an instance $I$ of {\sc{$2$-CSP}} in exactly the same manner as in Case~1. We solve it using a similar greedy procedure, so that the space left for placing $R_{\max}$ within the interior of $s$ is maximized. Here, there will be at most one connected component of the Gaifman graph of $I$ where a greedy strategy is applied; this is the horizontal component $C$ such that $P(C)$ contains one or both endpoints of $S$. Let $\Nn'$ be the obtained solution to $I$. The analogue of Claim~\ref{cl:greedy-outcome} now says the following: $\mathrm{int} S\cap \bigcup \Nn' \subseteq \mathrm{int} S\cap \bigcup \Nn$, hence $R_{\max}$ is disjoint with every segment in $\Nn'$. Consequently, if we denote $S'\coloneqq \mathrm{int} S\setminus \bigcup \Nn'$, then $S'$ is an open segment that contains $R_{\max}$.
	
	Now, let $\Rr$ be the set of all segments from $\Dd$ contained in $S'$ and whose weight is equal to the guessed weight of $R_{\max}$. Since all segments of $\Rr$ lie on the same line, using a polynomial-time left-to-right greedy sweep we may find a maximum independent set of segments within $\Rr$; call it $\Ll$. Let $L$ be the set of vertical lines passing through the right endpoints of the segments in $\Ll$. Note that by construction of $\Ll$, $L$ hits all segments in $\Ll$. We again consider two subcases:
	\begin{itemize}[nosep]
	 \item If $|\Ll|=|L|<k-|\Nn|$, then we add the grid $G'=G\cup L$ to $\Gg$.
	 \item If $|\Ll|=|L|\geq k-|\Nn|$, then we add the grid $G'=G\cup L'$ to $\Gg$, where $L'$ is any, arbitrarily chosen, subset of $L$ with size $k-|\Nn|$.
	\end{itemize}
    A reasoning analogous to Case~1 shows the following. In the first subcase, $G'$ is nice with respect to $R_{\max}$, hence the ugliness of $G'$ is strictly smaller than that of $G$. In the second case, $\Nn'\cup \Ll$ is an optimum solution and $G'$ is nice with respect to $\Nn'\cup \Ll$, hence the ugliness of $G'$ is~$0$.

%
%
%

\bigskip

In both Case~1 and Case~2 we constructed a grid $G'\supseteq G$ with $|G'\setminus G|\leq k$ whose ugliness is strictly smaller than that of $G$. We conclude the proof by taking $\Gg$ to be the set of all grids $G'$ constructed in this manner.  
\end{proof}

Finally, Lemma~\ref{lem:branching} can be applied in a recursive manner to obtain a nice grid.

\begin{lemma}\label{lem:recursion}
     Given a finite set $\Dd$ of axis-parallel segments in the plane,  a positive weight function $\omega$ on $\Dd$, a positive integer $k$, and a grid $G$ that hits all segments of $\Dd$ and encloses $\Dd$. Let $W$ be the number of different weights assigned by $\omega$. Then one can in time $(k\cdot W\cdot |G|)^{\Oh(k^2)}\cdot |\Dd|^{\Oh(1)}$ construct a family $\Gg$ of grids such that:
\begin{enumerate}[align=left, font=\normalfont, label=(\roman*),nosep]
     \item\label{p:size2} $|\Gg|\leq (k\cdot W\cdot |G|)^{\Oh(k^2)}$;
     \item\label{p:inc2} for each $G'\in \Gg$, we have $G'\supseteq G$ and $|G'\setminus G|\leq k^2$; and
     \item\label{p:ugly2} $\Gg$ contains at least one grid of ugliness $0$.
    \end{enumerate}
\end{lemma}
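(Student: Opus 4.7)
The plan is to iterate Lemma~\ref{lem:branching} up to $k$ times, building a tree of grids with $G$ at the root. At every internal node holding a grid $G_i$ at depth $i<k$, I apply Lemma~\ref{lem:branching} to $G_i$ and declare the children of the node to be the grids in the family it returns. I halt the expansion as soon as depth $k$ is reached, and let the output $\Gg$ consist of all grids appearing at any depth $\le k$ in this tree. Since every application of Lemma~\ref{lem:branching} adds at most $k$ new lines by property~\ref{p:inc}, every grid $G'$ produced satisfies $|G'\setminus G|\le k\cdot k=k^2$, which immediately yields~\ref{p:inc2}.

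For the size bound~\ref{p:size2} and the running time, I use that the branching factor at a node holding $G_i$ is $(|G_i|\cdot W)^{\Oh(k)}$, with the children constructed in time $(|G_i|\cdot W)^{\Oh(k)}\cdot|\Dd|^{\Oh(1)}$. Across the entire tree the grid size never exceeds $|G|+k^2$, so the branching factor is uniformly bounded by $((|G|+k^2)\cdot W)^{\Oh(k)}$. Multiplying over the $k$ levels gives $|\Gg|\le ((|G|+k^2)\cdot W)^{\Oh(k^2)}\le (k\cdot W\cdot|G|)^{\Oh(k^2)}$, and the total construction time is of the same order, up to a $|\Dd|^{\Oh(1)}$ factor.

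To establish~\ref{p:ugly2}, I exhibit a root-to-node path along which the ugliness strictly decreases until it reaches $0$. The ugliness of the root $G$ is at most $k$, since any optimum solution contains at most $k$ segments, each of which can contribute to the ugliness. If this value is already $0$, then $G\in\Gg$ and we are done. Otherwise, property~\ref{p:ugly} of Lemma~\ref{lem:branching} guarantees that at least one child of $G$ in the tree has strictly smaller ugliness, and I continue the descent from that child. Since ugliness is a nonnegative integer that drops by at least $1$ at each step, it must hit $0$ after at most $k$ steps, and the corresponding grid lies in $\Gg$ by construction.

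The only real obstacle is the bookkeeping of the recursion: at each step the branching factor depends on the current grid size, which itself grows with depth. The key observation that makes everything collapse neatly is that we descend at most $k$ levels and add at most $k$ lines per level, so the grid size stays uniformly bounded by $|G|+k^2$; this turns the geometric product of per-level branching factors into the advertised $(k\cdot W\cdot|G|)^{\Oh(k^2)}$ bound.
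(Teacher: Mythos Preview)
Your proof is correct and follows essentially the same recursive application of Lemma~\ref{lem:branching} as the paper. The only cosmetic difference is that you collect grids from all depths $\le k$ in the branching tree, whereas the paper keeps only the final layer $\Gg_k$; both work, and your version sidesteps the (easy) observation that ugliness is monotone under grid extension.
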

\begin{proof}
 Starting with $\Gg_0\coloneqq \{G\}$, we iteratively construct families of grids $\Gg_1,\Gg_2,\ldots,\Gg_k$ as follows: to construct $\Gg_i$ from $\Gg_{i-1}$, replace each grid $G\in \Gg_{i-1}$ with the family $\Gg(G)$ obtained by applying Lemma~\ref{lem:branching} to $G$.
 A straightforward induction using properties~\ref{p:size} and~\ref{p:inc} of Lemma~\ref{lem:branching} shows that: $|\Gg_i|\leq (k\cdot W\cdot |G|)^{\Oh(ik)}$, for each $G'\in \Gg_i$ it holds that $G'\supseteq G$ and $|G'\setminus G|\leq ik$, and the construction of $\Gg_i$ takes $(k\cdot W\cdot |G|)^{\Oh(ik)}\cdot |\Dd|^{\Oh(1)}$ time. Moreover, by property~\ref{p:ugly} of Lemma~\ref{lem:branching}, if the minimum ugliness among grids in $\Gg_{i-1}$ is positive, then the minimum ugliness among the grids in $\Gg_i$ is strictly smaller than that in $\Gg_{i-1}$. Since the ugliness of $G$ is at most $k$, it follows that $\Gg\coloneqq \Gg_k$ satisfies all the required properties.
\end{proof}

\subsection{Proof of Theorem~\ref{thm:segments}}

We are ready to assemble all the tools and prove Theorem~\ref{thm:segments}.
%
%
%

\begin{proof}[Proof of Theorem~\ref{thm:segments}]
    As discussed in Section~\ref{sec:grid}, by preprocessing the instance and branching into $\Oh(|\Dd|)$ possibilities, we may assume that we constructed a grid $G$ of size $\Oh(k^2)$ such that every segment in $\Dd$ is hit by $G$. Adding four lines to $G$ ensures that $G$ encloses $\Dd$. Then we apply Lemma~\ref{lem:recursion} to $G$, and we construct a family of grids~$\Gg$ that features at least one grid with ugliness $0$. It now remains to apply Lemma~\ref{lem:nice-suffices} to each grid in $\Gg$ and output the heaviest of the obtained solutions.
    Following directly from the guarantees provided by Lemmas~\ref{lem:nice-suffices} and~\ref{lem:recursion}, this algorithm runs in time $(kW)^{\Oh(k^2)}\cdot |\Dd|^{\Oh(1)}$.
\end{proof}

As argued in Section~\ref{sec:weight-reduction}, Theorem~\ref{thm:main-segm} follows from Theorem~\ref{thm:segments}.

\subparagraph*{Acknowledgements.} The results presented in this paper were obtained
during the trimester on Discrete Optimization at the Hausdorff Research
Institute for Mathematics (HIM) in Bonn, Germany.

\bibliographystyle{abbrv}
\bibliography{bib}



\end{document}